\newcommand{\blue}[1]{\textcolor{black}{#1}}
  \providecommand\BibTeX{{%
    \normalfont B\kern-0.5em{\scshape i\kern-0.25em b}\kern-0.8em\TeX}}}
\begin{document}

\title[RiskProp]{RiskProp: Account Risk Rating on Ethereum via De-anonymous Score and Network Propagation}



\author{Dan Lin}
\orcid{0000-0001-7067-2396}
\affiliation{%
  \institution{School of Software Engineering, \\Sun Yat-sen University}
  \city{Zhuhai}
  \country{China}}
\email{lind8@mail2.sysu.edu.cn}

\author{Jiajing Wu}
\authornote{Corresponding author.}
\affiliation{%
  \institution{School of Computer Science and \\Engineering, Sun Yat-sen University}
  \city{Guangzhou}
  \country{China}}
\email{wujiajing@mail.sysu.edu.cn}

\author{Qishuang Fu}
\affiliation{%
  \institution{School of Computer Science and \\Engineering, Sun Yat-sen University}
  \city{Guangzhou}
  \country{China}}
\email{fuqsh6@mail2.sysu.edu.cn}

\author{Zibin Zheng}
\affiliation{%
  \institution{School of Software Engineering, \\Sun Yat-sen University}
  \city{Zhuhai}
  \country{China}}
\email{zhzibin@mail.sysu.edu.cn}

\author{Ting Chen}
\affiliation{%
  \institution{University of Electronic Science and
Technology of China}
  \city{Guangzhou}
  \country{China}}
\email{brokendragon@uestc.edu.cn}

\renewcommand{\shortauthors}{Anonymous author(s)}

\settopmatter{printacmref=false} 
\renewcommand\footnotetextcopyrightpermission[1]{}

\begin{abstract}
  As one of the most popular blockchain platforms supporting smart contracts, Ethereum has caught the interest of both investors and criminals. Differently from traditional financial scenarios, executing Know Your Customer verification on Ethereum is rather difficult due to the pseudonymous nature of the  blockchain. Fortunately, as the transaction records stored in the Ethereum blockchain are publicly accessible, we can understand the behavior of accounts or detect illicit activities via transaction mining. Existing risk control techniques have primarily been developed from the perspectives of de-anonymizing address clustering and illicit account classification. However, these techniques cannot be used to ascertain the potential risks for all accounts and are limited by specific heuristic strategies or insufficient label information. These constraints motivate us to seek an effective rating method for quantifying the spread of risk in a transaction network. To the best of our knowledge, we are the first to address the problem of account risk rating on Ethereum by proposing a novel model called RiskProp, which includes a de-anonymous score to measure transaction anonymity and a network propagation mechanism to formulate the relationships between accounts and transactions. We demonstrate the effectiveness of RiskProp in overcoming the limitations of existing models by conducting experiments on real-world datasets from Ethereum. Through case studies on the detected high-risk accounts, we demonstrate that the risk assessment by RiskProp can be used to provide warnings for investors and protect them from possible financial losses, and the superior performance of risk score-based account classification experiments further verifies the effectiveness of our rating method.
\end{abstract}

\keywords{Abnormal detection, network propagation, Ethereum, risk control, de-anonymization}

\maketitle

\section{Introduction}


Ethereum~\cite{wood2014ethereum} has the second-largest market cap in the blockchain ecosystem. The account model is adopted on Ethereum, and the native cryptocurrency on Ethereum is named Ether (abbreviated as ``ETH''), which is widely accepted as payments and transferred from one account to another. It is known that Ethereum accounts are indexed according to pseudonyms, and the creation of accounts is almost cost-free. This anonymous nature and the lack of regulation result in the bad reputation of Ethereum and other blockchain systems 
for breeding malicious behaviors and enabling fraud, thereby resulting in large property losses for investors. As reported in a \href{https://go.chainalysis.com/2021-crypto-crime-report.html}{\textit{Chainalysis Crime Report}}, the illicit share of all cryptocurrency activities was valued at nearly USD 
2.7 billion in 2020.
These losses illustrate that Know-Your-Customer (KYC) and risk control of accounts are critical and necessary. Risk control~\cite{risk2014Malte} not only helps wallet customers identify risky accounts and avoid losses but also plays a vital role in the anti-money laundering of virtual asset service providers, such as cryptocurrency exchanges.

Therefore,  a wealth of efforts have been expended in risk control on Ethereum in recent years. In September 2020, the Financial Action Task Force (FATF) published a recommendation report 
on virtual assets and released information on \textit{Red Flag Indicators}~\cite{fatf2022} related to transactions, anonymity, senders or recipients, the source of funds, and geographical risks. 
In addition, researchers in the academic community have proposed various techniques from the perspectives of address clustering and illicit account classification.
Address clustering techniques perform entity identification of anonymous accounts. For example, Victor~\cite{victor2020address} proposes several clustering heuristics for Ethereum accounts and clusters 17.9\% of all active externally owned accounts. 
Illicit account detection techniques focus on training classifiers based on well-designed features extracted from transactions~\cite{Farrugia2020,trans2vec2020Wu,Chen2020Phishing}. Moreover, some researchers have developed methods for automatic feature extraction incorporating structural information~\cite{elliptic2019Weber, yuan2020phishing, modeling2020lin,Li2022TTAGN}. 



However, there are still some limitations (L) associated with these techniques. 
\textbf{L1:} Account clustering techniques 
can only be applied to part of accounts and therefore have limited applicability, and most accounts  beyond heuristic rules cannot thus be identified. 
\textbf{L2:} In the existing methods for illicit account detection, \textit{binary} classifiers are usually trained via \textit{supervised} learning. 
However, as only a very small percentage of risky nodes have clear labels, which are required for these methods, the vast majority of accounts that may be involved in malicious events are unlabeled. In particular, risky accounts with few transactions or unseen patterns are likely to be misidentified in practical use. 

To address the limitations presented above, we explore  risk control on Ethereum from a new perspective: \textit{Account risk rating}. 
In traditional financial scenarios, credit scoring is usually conducted by authorized financial institutions, which perform audits on their customers to fully understand their identity, background,  and financial credit standing. Similarly to credit scoring, risk rating on Ethereum can help us quantify the 
latent risk of a transaction or account with a quantitative score, thereby combating money laundering and identifying potential scams before 
new victims emerge.
In terms of the abovementioned \textbf{L1}, in contrast to the traditional account clustering method, which can only de-anonymize a small number of accounts, the account risk method proposed in this paper can obtain quantitative risk indicators for \textit{all} accounts.
Regarding \textbf{L2}, the proposed risk rating method can achieve decent performance in an unsupervised manner without feeding labels. The output of  the proposed method is risk values, which are provided continuously and allow evaluation of the severity of risk.

\begin{figure}
  \centering
  \includegraphics[width=0.8\linewidth]{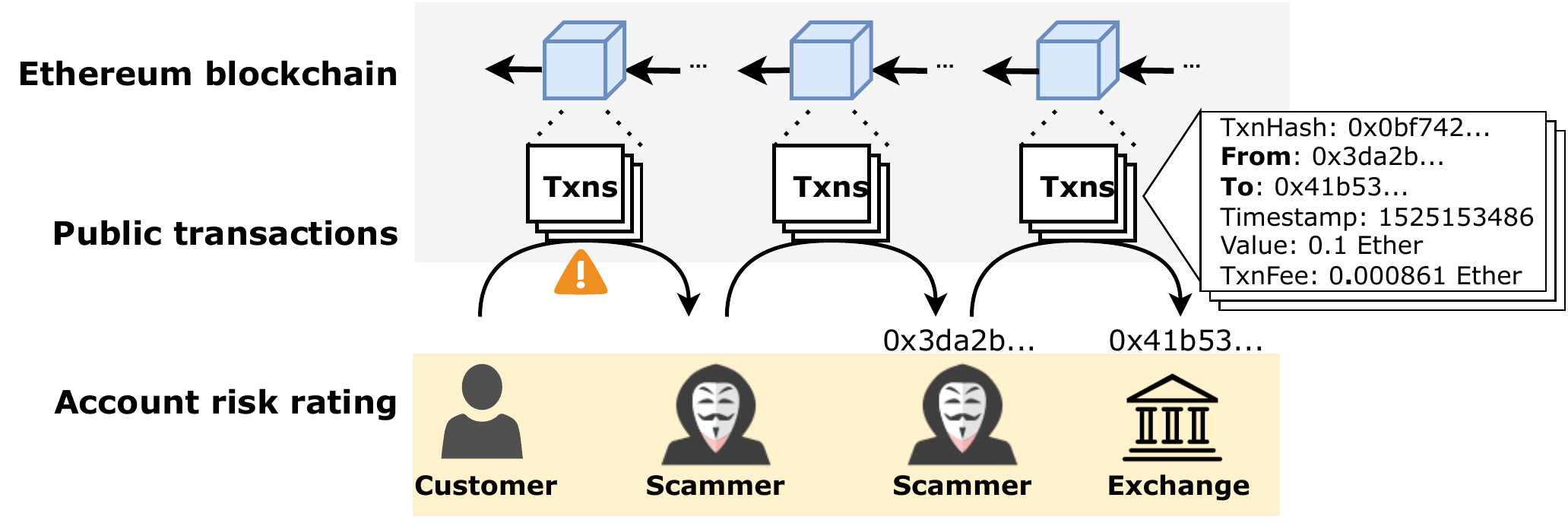}
  \vskip -2ex
  \caption{The procedure of ETH transfer in Ethereum. ``From'' denotes the sender, ``To'' denotes the receiver, and ``Txn'' denotes ``Transaction''.}
  \vskip -2ex
  \label{fig:background}
\end{figure}

Compared with traditional financial scenarios, several unique challenges (C) are encountered in the task of account risk rating on Ethereum. 
\textbf{C1: Nature of anonymity.} Transactions on Ethereum do not require real-name verification. Even worse, perpetrators of some malicious activities deliberately enhance their anonymity to counter the impact of de-anonymizing clustering techniques~\cite{elliptic2019Weber}. 
\textbf{C2: Complex transaction relationship.}
Compared with traditional financial scenarios, a user or entity on Ethereum may control a large number of accounts at almost no cost, and the transaction relationship between accounts is also more complex. 
How to quantify the impact of trading behavior between accounts on account risk is a challenging core problem.

To overcome the challenges mentioned above, we propose a novel approach called Risk Propagation (\textit{RiskProp}) for Ethereum account rating. It comprises two core designs, namely de-anonymous score and a network propagation mechanism.
To resolve \textbf{C1}, de-anonymous score measures the degree 
to which transactions remain anonymous. For example, both the payer and the payee of an illicit transaction prefer to have a small number of transactions to ensure anonymity-preserving protection. In contrast, both sides of a licit transaction may participate in numerous interactions without evading the impact of the de-anonymized clustering algorithm. 
Afterward, to resolve \textbf{C2}, we model the massive transaction records as a directed bipartite graph and introduce a network propagation mechanism with three interdependent metrics, namely \textit{Confidence} of the de-anonymous score, \textit{Trustiness} of the payee, and \textit{Reliability} of the payer. Intuitively, payees with higher trustiness receive transactions with higher de-anonymous scores, and payers with higher reliability will send transactions with higher confidence. Clearly, reliability, trustiness, and confidence are related to each other, so we define five items of prior knowledge that these metrics should satisfy and propose three mutually recursive equations to estimate the values of these metrics. 
To verify the effectiveness of the proposed risk rating method and further illustrate the significance of rating for risk control on Ethereum, we evaluate the effect of the risk rating system via experiments from two aspects, i.e., analysis of risk rating results and rating score-based illicit/licit classification.

Overall, our contributions are summarized as follows:

\noindent $\bullet$ \textbf{A new perspective for Ethereum risk control. } This paper is the first to propose tackling the problem of Ethereum risk control via the perspective of account risk rating.
    
\noindent $\bullet$ \textbf{A novel risk metric for transactions. } We creatively develop a metric called \textit{de-anonymous score} for transactions, which measures the degree of de-anonymization to quantify the risk of a transaction.
    
\noindent $\bullet$ \textbf{An effective method and interesting insights.} We implement a novel risk rating method called \textit{RiskProp} and demonstrate its superior effectiveness and efficiency via experiments on a real-world Ethereum transaction dataset together with theoretical analysis. By analyzing the rating results and case studies on high-risk accounts, we obtain interesting insights into the Ethereum ecosystem and further show how our method could prevent financial losses ahead of 
blacklisting malicious accounts.




\vspace{-1.5ex}
\section{Preliminary}

\subsection{Ethereum Financial Background}

Ether is the native ``currency'' on Ethereum and plays a fundamental part in the Ethereum payment system. Ether can be paid or received in financial activities, just like currency in real life. 
In conventional financial scenarios, a Know Your Customer (KYC) check is the mandatory process to identify and verify a customer's identity when opening an account and to periodically understand the legitimacy of the involved funds over time. 
However, unlike traditional transaction systems, where customers' identity information is required and obtained in KYC checks, Ethereum accounts are designed as pseudonymous addresses identified by 20 bytes of public key information generated by cryptographic algorithms, for example,
``0x99f154f6a393b088a7041f1f5d0a7cbfa795d301''. 



Figure~\ref{fig:background} depicts the risky scenario of Ether transfer in aspects of data acquisition. It includes three layers: 1) Ethereum blockchain. The Ethereum historical data are irreversible and publicly traceable on the chain. 2) Public transactions. The transaction denotes a signed data package from an account to another account, including the sending address, receiver address, transferred Ether amount, etc. 3) Account risk rating. Usually, the identities who control the accounts are not labeled. Customers may become involved in suspicious financial crimes or be vulnerable to frauds and scams. Furthermore, the illicit funds can be laundered and cashed out via exchanges. In this procedure, our proposed \textit{RiskProp} is implemented to measure the risk of unlabeled accounts that may have ill intentions and alert customers when engaging in suspicious, potentially illegal transactions.

\vspace{-1.5ex}
\subsection{The Nature of Blockchain: Anonymity}

It is known that the Ethereum account is identified as a pseudonymous address. 
However, if customers repeatedly use the same address as on-chain identification,  the relationship between addresses becomes linkable via public transaction records. 
Accounts that participate in more transactions and connect with more accounts experience degrading anonymity~\cite{elliptic2019Weber}. 
To reduce the likelihood of exposure, criminals naturally tend to initiate transactions with fewer accounts.
Here is an example on Ethereum: The two accounts of transaction \href{https://cn.etherscan.com/tx/0x9a9df5e68a8cbb30e901024367c88dd3401b455a5916204753ebda6070bccde9}{0x9a9d} have only three transactions and became inactive thereafter. 
These two accounts are considered suspicious and reported as relevant \href{https://cn.etherscan.com/accounts/label/upbit-hack}{accounts of Upbit exchange hack}.
On the contrary, entities who do not deliberately take anonymity-preserving measures are likely to be normal~\cite{elliptic2019Weber}. Thus, the transaction is scored based on the fact of whether the accounts are trying to hide or not, which is the \textit{de-anonymous score}.

\textbf{Definition 1 (De-anonymous score, abbreviated as ``score'').}
The de-anonymous score of a transaction from account $u$ to $v$ where there is no intention to hide is defined as

\begin{equation}
    \begin{split}
    Score(u, v) =  & \frac{1}{2}(\frac{2\log |OutTxn(u)|-\log maxOut}{\log maxOut} \\ 
    &+ \frac{2\log |InTxn(v)|-\log maxIn}{\log maxIn}),
    \end{split}
    \label{equ:Score}
\end{equation}
where $OutTxn(u)$ represents the outgoing transactions (payments) of payer $u$, $InTxn(v)$ represents the incoming transactions (receptions) of payee $v$, and $|\times |$ denotes the size of a set. The minimum value of $|OutTxn(u)|$ and $|InTxn(v)|$ is 1.
Let $maxOut$ and $maxIn$ be the largest number of payments and receptions, respectively.  
The de-anonymous scores of a transaction $(u, v)$ range from $-1$ (very high anonymity, abnormal) to 1 (very low anonymity, normal). Intuitively, the score of $(u, v)$ increases as the transaction numbers of either payer or payee grow.
Note that tricky criminals may camouflage themselves by deliberately conducting low-anonymity transactions ~\cite{flowscope2020}.

\subsection{Transaction Network Construction}
First, each transaction on Ethereum has one payer (i.e., sender) and one payee (i.e., receiver). 
Any account can be the role of payer or payee, just as a person in real life has different roles. 
The payee is a \textit{passive} role and, therefore, we consider the incoming transactions to indicate the trustiness of an account. For instance, exchange accounts that receive more transactions are considered to be more trustworthy. 
In contrast, the payer is an \textit{active} role and, thus, the outgoing transactions embody the intention of an account. For example, a scam account subjectively wants to transfer stolen money to its partners.



Next, the transaction records are modeled as a directed bipartite graph $G=(U, V, S)$, where $U$, $V$, and $S$ represent the set of all payers, payees, and scores, respectively. A weighted edge $(u, v)$ denotes the transfer of Ethers from account $u\in U$ to account $v\in V$ with $Score(u, v) \in S$. The graph construction procedure is shown in Figure~\ref{fig:networkconstruction}.

Then, the ego network of a payer $u$ is introduced. It is formed by its outgoing scores and corresponding payee neighbors, formulated as $Out(u)\cup \{v|(u,v)\in Out(u)\}$, where $Out(u)$ is the set of scores connected with $u$. It is similar for the ego network of a payee, formulated as $In(v)\cup \{u|(u,v)\in In(v)\}$. Figure~\ref{fig:example} shows an example in which there are two payers, two payees, and three transactions.


\begin{figure}
  \centering
    \includegraphics[width=0.9\linewidth]{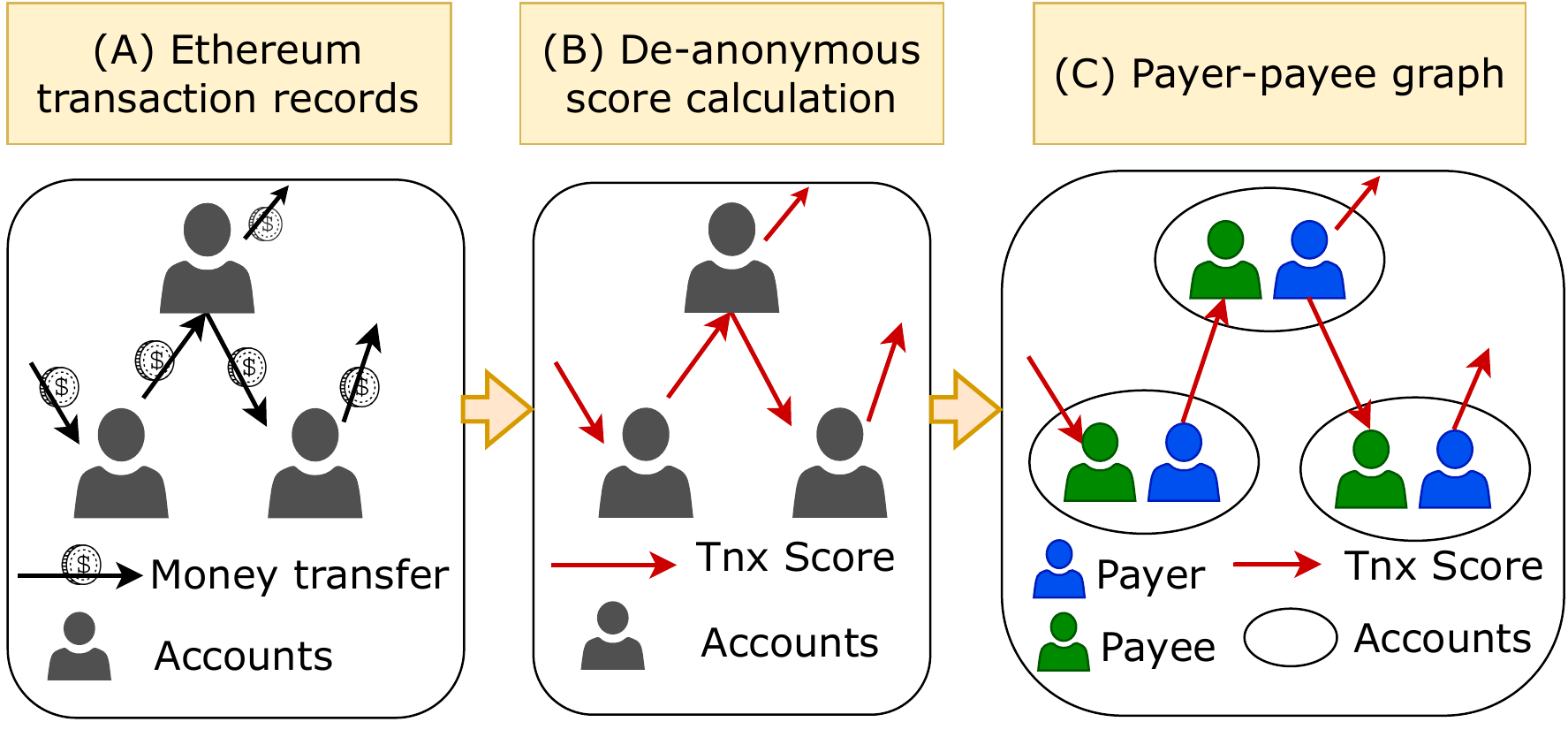}
  \vskip -1.5ex
  \caption{The transformation from the raw transaction records to the directed bipartite graph. ``Txn'' denotes ``Transaction''.}
  \vskip -1.5ex
  \label{fig:networkconstruction}

\end{figure}

\begin{figure}[t]
  \centering
  \includegraphics[width=0.95\linewidth]{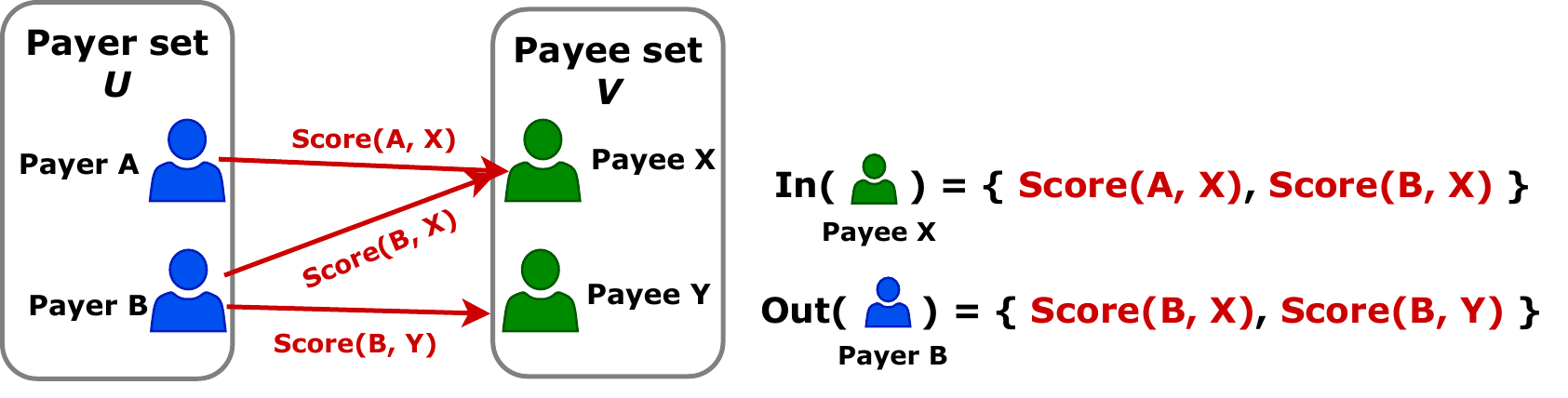}
  \vskip -1.5ex
  \caption{A toy example of the directed bipartite graph established from transactions and the illustration of functions $In$ and $Out$.}
  
  \vspace{-1.5ex}
  
  \label{fig:example}

\end{figure}

\section{Model}

In this section, we describe the prior knowledge that establishes the relationships among accounts and transactions and then propose risk propagation formulations that satisfy the prior knowledge. It is worth noticing that the proposed algorithm does not require handcraft feature engineering.

\subsection{Problem Definition and Model Overview}

Given raw transaction records of Ethereum, we model the transaction relationships between accounts as a directed bipartite graph $G=(U, V, S)$ with payers and payees as nodes and prepossessed de-anonymous scores as weights of edges.
We believe that accounts have intrinsic metrics to quantify their reliability and trustworthiness and transactions have intrinsic metrics to measure the confidence of their calculated de-anonymous scores.
Naturally, those metrics are interdependent and interplay with each other via the risk propagation mechanism:

    \noindent $\bullet$ \textbf{Payers} vary in terms of their \textit{Reliability}, which indicates how motivated they are. A licit payer without malicious intent usually does not hide himself or disguise its intentions during transactions. Specifically, a reliable payer has harmless intentions regardless of whether it is transferring money to an exchange or to a scammer account (being gypped). In contrast, a perpetrator (e.g., a scammer) hopes to cover up its traces~\cite{victor2020address}. The reliability metric $R(u)$ of a payer $u$ lies in $[0,1]$, $\forall u\in U$. A value of 1 denotes a 100\% reliable payer and 0 denotes a 0\% reliable payer.
    
    \noindent $\bullet$ \textbf{Payees} vary in their 
    trustworthiness level, measured by a metric called \textit{Trustiness}, which indicates how trustworthy they are. Intuitively, a cryptocurrency service provider with a better reputation will receive more licit transactions (with higher scores) from well-motivated payers. Trustiness of a payee $T(v)$ ranges from 0 (very untrustworthy) to 1 (very trustworthy) $\forall v\in V$.
    
    \noindent $\bullet$ \textbf{De-anonymous scores} vary in terms of \textit{Confidence}, which reflects the confidence in the estimated risk probability of a transaction. The confidence metric $Conf(u,v)$ ranges from 0 (lack of confidence) to 1 (very confident).

\textit{The connection between the reliability and risk of accounts:}
We define \textit{Reliability} to characterize the risk rating of accounts because an account's intention can be inferred by its (active) sending behavior, rather than by its (passive) receiving behavior. A scammer transferring stolen money to its gang is a better reflection of its evil intention than the receipt of stolen money from victims. In the later section, we calculate the \textit{risk} rating of accounts based on the \textit{Reliability} of payer roles. 


\subsection{Network Propagation Mechanism}
Given a cryptocurrency payer--payee graph, all intrinsic metrics are unknown but are interdependent. Here, we introduce five items of prior knowledge that establish the relationships and how the network propagation mechanism is specially designed for our problem. The first two items of prior knowledge reflect the interdependency between a payee and the de-anonymous scores that they receive.

\noindent \textbf{[Prior knowledge 1] }
Payees with higher trustiness receive transactions with higher de-anonymous scores.
Intuitively, a payee receiving transactions with high de-anonymous scores is more likely to be trustworthy. Formally, if two payees $v_1$ and $v_2$ have a one-to-one mapping,  $h:In(v_1)\rightarrow In(v_2)$ and $Score(u,v_1) > Score(h(u), v_2)$ $\forall(u,v_1)\in In(v_1)$, then $T(v_1)>T(v_2)$.

\noindent \textbf{[Prior knowledge 2]}
Payees with higher trustiness receive transactions with more positive confident scores. 
For two payees $v_1$ and $v_2$ with identical de-anonymous score networks, if the confidence of the in-transactions of payee $v_1$ is higher than that of payee $v_2$, the trustiness of payee $v_1$ should be higher. Formally, if two payees $v_1$ and $v_2$ have a one-to-one mapping, $h:In(v_1)\rightarrow In(v_2)$ and $Conf(u,v_1) > Conf(h(u), v_2)$ $\forall(u,v_1)\in In(v_1)$, then $T(v_1)>T(v_2)$.

According to the above prior knowledge, we develop the \textit{Trustiness} formulation for $\forall v \in V$ of our \textit{RiskProp} algorithm:
\begin{equation}
    T(v) = \frac{\sum_{(u,v)\in In(v)} Score(u,v)\times Conf(u,v)}{|In(v)|}.
    \label{T}
\end{equation}

The next item of prior knowledge defines the relationship between the score of a transaction and the connected payer--payee pair using the 
anonymous nature of cryptocurrency.



\begin{algorithm}[t]
\caption{\textit{RiskProp} Algorithm}
\label{alg:riskprop}
\small
\begin{algorithmic}[1]
  \State \textbf{Input:} Directed Bipartite Graph  $G=(U,V,S)$
  \State \textbf{Output:} \emph{Risk} of accounts
  \State Initialize $T^0=0.5, R^0=0.7, Conf^0=0.5, t=0, \Delta=1$
  \While{$\Delta \geq 0.01$}
      \State $t=t+1$
      \State Update $trustiness$ of payees using Equation 2
      \State Update $reliablity$ of payers using Equation 4
      \State Update $confidence$ of transactions using Equation 3
      \State $\Delta_T= \sum_{v\in V}|T^t(v)-T^{t-1}(v)|$ 
      \State $\Delta_R=\sum_{u\in U}|R^t(u)-R^{t-1}(u)|$
      \State $\Delta_C= \sum_{(u,v)\in S}|Conf^t(u,v)-Conf^{t-1}(u,v)|$
      \State $\Delta=\max\{ \Delta_T, \Delta_R, \Delta_C\}$
  \EndWhile
  \State $Risk(u) = (1-R(u))\times 10, \forall u \in U$
  \State \Return 
\end{algorithmic}
\end{algorithm}

\begin{table}
  \centering
  \caption{An example of propagation. $R^{0}$ is initial value, $R^{final}$ and Risk$^{final}$ are the results after convergence.}
  \vskip -1.5ex
  \label{tab:example}
  \scalebox{0.8}{
  \begin{tabular}{c|cccc}
    \toprule
    Account & Label & $R^{0}$ & $R^{final}$ & Risk$^{final}$\\
    \midrule
    \midrule
    \href{https://cn.etherscan.com/address/0xa768cc13d1ab64283882ffa74255bb0564a7592b}{0xa768} & Contract-deployer & 0.7 & 0.8575 & 1.425 \\
    \href{https://cn.etherscan.com/address/0x8271b2e8cbe29396e9563229030c89679b9470db}{0x8271} & Exchange & 0.7 & 0.9526 & 0.474  \\
    \href{https://cn.etherscan.com/address/0xebdc8d0445e111aad61006f13e7a2d071a473779}{0xebdc} & Phish-hack & 0.7 & 0.1195 & 8.805  \\
    \href{https://cn.etherscan.com/address/0xfe34aa1b07c3c61305df1be55719f0be44a23cd9}0xfe34 & Phish-hack & 0.7 & 0.2330 & 7.670 \\
  \bottomrule
\end{tabular}
}
\vskip -1.5ex
\end{table}

\noindent \textbf{[Prior knowledge 3]} 
Confident de-anonymous scores of transactions are closely linked with the connected payee's trustiness. 
Formally, if two scores $(u_1, v_1)$ and $(u_2, v_2)$ are such that 
$Score(u_1, v_1)=Score(u_2, v_2)$, $R(u_1)=R(u_2)$, and $|Score(u_1,v_1)$ $- T(v_1)|\leqslant$ 
$|Score(u_2,v_2)$ $-T(v_2)|$, then $Conf(u_1, v_1)\geqslant Conf(u_2, v_2)$.

We imply that different transactions sent by the same payers can have different intentions and anonymity. Even scammers on Ethereum can have transactions that seem normal.

\noindent \textbf{[Prior knowledge 4] }
Transactions with higher confidence de-anonymous scores are sent by more reliable payers.
Formally, if two scores $(u_1, v_1)$ and $(u_2, v_2)$ are such that 
$Score(u_1, v_1)=Score(u_2, v_2)$, $T(v_1)=T(v_2)$, 
and $R(u_1)\geqslant R(u_2)$, then $Conf(u_1, v_1)\geqslant Conf(u_2, v_2)$.

This prior knowledge incorporates the payer's intention in measuring the confidence of transaction scores. 
In this way, payees may have different confidence in receiving transactions with the same anonymity.
For instance, exchanges on Ethereum receive funds from payers with different motivations---some are ordinary investors and some are suspicious accounts.

Below, we propose the \textit{Confidence} formulation that satisfies the above items of prior knowledge:
\begin{equation}
    Conf(u, v) = \frac{R(u) + (1-|Score(u,v)-T(v)|)}{2}.
    \label{C}
\end{equation}

Then, we describe how to quantify the \textit{Reliability} metric of a payer by the transactions it sends.

\noindent \textbf{[Prior knowledge 5] }
Payers with higher reliability send transactions with higher confidence. 
For two payers $u_1$ and $u_2$ with equal scores, 
if payer $u_1$ has higher confidence for all out transaction scores than $u_2$, then payer $u_1$ has a higher reliability.
Formally, if two payers $u_1$ and $u_2$ have $h:Out(u_1)\rightarrow Out(u_2)$ and $Conf(u_1,v_1) > Conf(u_2, h(v))$ $\forall(u_1,v)\in Out(u_1)$, then $R(u_1)>R(u_2)$. 
The corresponding formulation of \textit{Reliability} metric for $\forall u \in U$ is defined as 
\begin{equation}
    R(u) = \frac{\sum_{(u,v)\in Out(u)} Conf(u,v)}{|Out(u)|}.
    \label{R}
\end{equation}

Finally, the risk rating of an account is calculated by $Risk(u) = (1-R(u))\times 10$. 
The pseudo-code of \textit{RiskProp} network propagation is described in Algorithm~\ref{alg:riskprop}.
Let $T^{0}$, $Conf^{0}$, $R^{0}$ be initial values and $t$ be the number of interactions. In the beginning, we have initial reliability $R^{0}$ $\forall u \in U$, initial trustiness $T^{0}=0.5$ $\forall v \in V$, and initial confidence $Conf^{0}=0.5$ for all transactions. 
Then, we keep updating metrics using Equations 2--4 until $\Delta$ is less than 0.01.




\subsubsection*{\textbf{\textit{RiskProp+}: A Semi-supervised Version}}
\blue{Sometimes, we have partial information about the labels of fraudulent accounts (verified, phishing scams, etc.) and licit accounts. We can take advantage of such prior information and incorporate them into our approach in a semi-supervised manner.
In the semi-supervised \textit{RiskProp+}, we initialize the \textit{Reliability} metrics only for the training accounts. According to the risk levels of services reported by \textit{Chainalysis}~\cite{Chainalysis2021}, we set $R^0=0.9$ for \textit{ICO wallet}, \textit{Converter}, and \textit{Mining}, $R^0=0.7$ for \textit{Exchange}, $R^0=0.4$ for \textit{Gambling}, $R^0=0$ for \textit{Phish/Hack}, and set $R^{0}=0.7$ for testing accounts. The reliability values of labeled illicit accounts are unchanged during the training procedure. }

\subsubsection*{\textbf{Example}}
Here, we use a small real-world dataset on Ethereum to intuitively show the results of \textit{RiskProp+} after interactions.
We collect transactions of 10 accounts (6 for training and 4 for testing), including 28,598 accounts and 52,733 transactions in total.
Table~\ref{tab:example} shows how the reliability of the 4 testing accounts varies over interactions (we omit trustiness and confidence for brevity). These testing accounts have the same reliability values at the beginning ($R^{0}=0.7$). After convergence, accounts labeled as ``phish/hack'' get a lower value of reliability, and other licit accounts get  higher reliability. Confirming our intuition, \textit{RiskProp} learns that accounts 0xebdc and 0xfe34 are high-risk accounts that investors need to be aware of.

\subsubsection*{\textbf{Workflow for Account Risk Rating}}
\blue{Figure~\ref{fig:workflow} shows the workflow of account risk rating on Ethereum, which contains four modules: \textbf{(i) Data acquisition} collects accounts, transactions, and labels from Ethereum and \href{https://etherscan.io/}{Etherscan}. Only a few labels are provided, and these labels are not available in the unsupervised setting. 
\textbf{(ii) Data pre-processing} of raw transaction data described in Figure~\ref{fig:background} is conducted 
in two steps: de-anonymous score calculation and directed bipartite graph construction (i.e., payer--payee network).
\textbf{(iii) Account risk rating} recursively calculates the \textit{Reliability}, \textit{Trustiness} of accounts, and \textit{Confidence} of transaction scores until convergence, updated by the propagation mechanism. 
\textbf{(iv) Results analysis} contains risk rating results analysis, comparative evaluation, and further analysis.}

\begin{figure}
  \centering
  \includegraphics[width=0.9\linewidth]{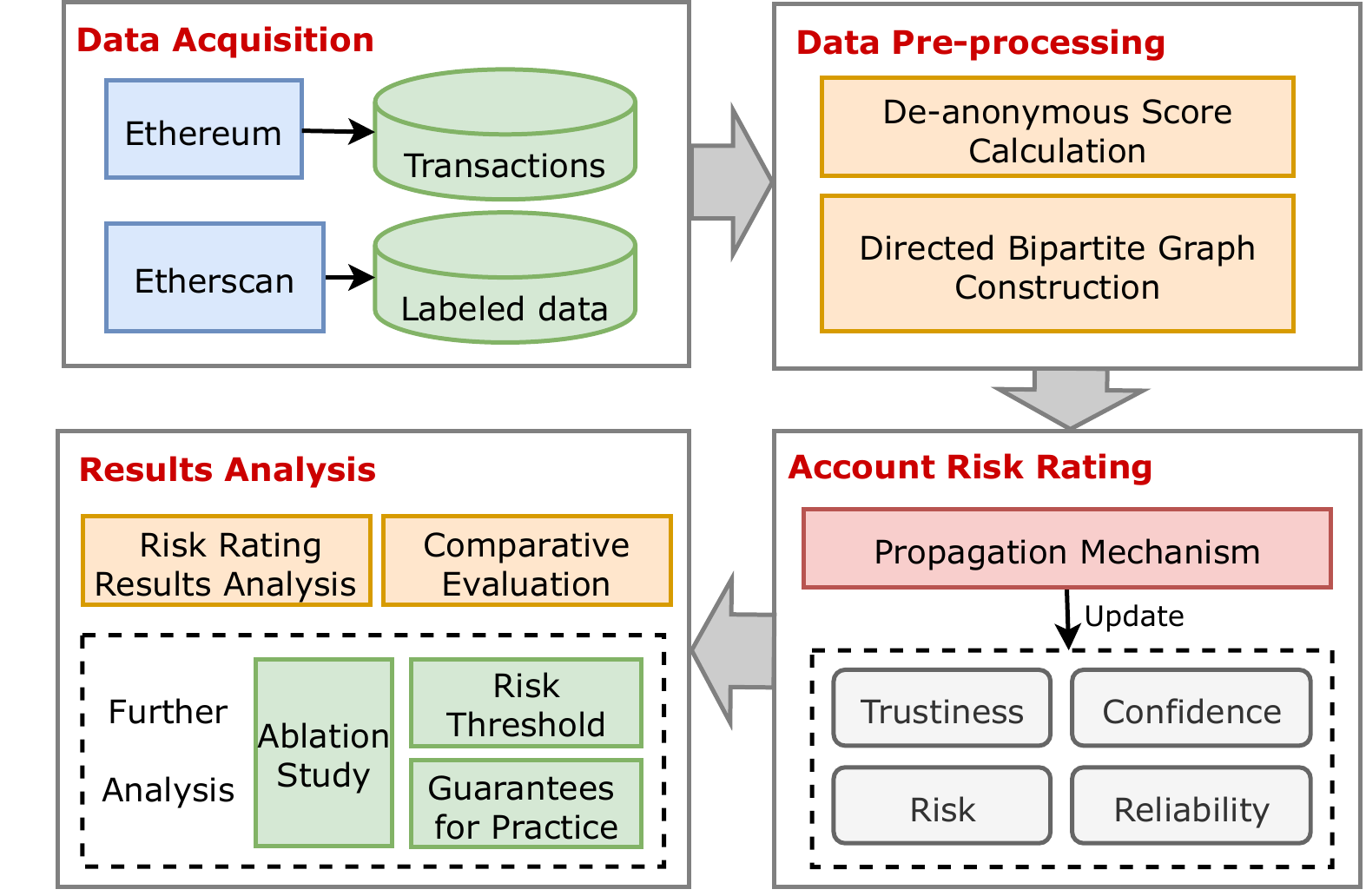}
  \vskip -2ex
  \caption{The workflow of account risk rating on Ethereum. }
  \label{fig:workflow}
  \vskip -2ex
  
\end{figure}

\vspace{-2ex}
\section{Experiments}
To investigate the effectiveness of \textit{RiskProp}, we conduct experiments on a real-world Ethereum transaction dataset. As risk rating is an issue without any ground truth, 
\blue{we verify the effectiveness and significance of the risk rating results of \textit{RiskProp} via three tasks: 1) \textbf{risk rating analysis}, which includes distribution of risk rating results and case studies of transaction pattern; 2) \textbf{comparative evaluation}, which reports on the classification performance of labeled accounts compared with various baselines; and  3) \textbf{further analysis}, which contains ablation study, impact of risk threshold, and guarantees for practical use.}
\textit{RiskProp} is open source and reproducible, and the code and dataset are publicly available after the paper is accepted.



\vspace{-2ex}
\subsection{Data Collection}

We first obtain 803 ground truth account labels from an official Ethereum explorer and then include all the accounts and transactions that are within the one-hop and two-hop neighborhood of each labeled account. Next, we filter out the zero-ETH transactions and construct the records into a graph, retaining the largest weakly connected component for experiments. As a result, there are 1.19 million accounts and 4.13 million transactions in the network. 
In the dataset, \textbf{0.02 percent} (243) are labeled illicit (e.g., phishing scam), whereas \textbf{0.05 percent} (560) are labeled licit (e.g., exchanges). The remaining unknown accounts are not labeled with regards to licit versus illicit. 
\vspace{-0.75ex}
\subsection{Effectiveness of De-anonymous Score}

We use one-way analysis of variance (ANOVA) to assess whether there is a significant difference between illicit and licit transactions in the proposed de-anonymous score in Equation (\ref{equ:Score}). 
We consider a transaction as \textit{illicit} (versus \textit{licit}) if its payer is marked as \textit{illicit} (versus \textit{licit}).
\blue{Table~\ref{tab:anova_rand_das} shows that compared with the random score, our proposed score achieves a larger mean square (MS) between groups and smaller MS within groups; in addition, our proposed score has a higher F value, and the $p$-value equals $0$. These results suggest that the de-anonymous score is a useful metric for assessing the quality of transactions.}

\begin{table}[htbp]
  \caption{ANOVA of random scores and de-anonymous scores.}
  \centering
  \vskip -1.5ex
  \label{tab:anova_rand_das}
  \scalebox{0.7}{
  \begin{tabular}{c|ccc|ccc}
    \toprule
    &  \multicolumn{3}{c|}{Random scores} & \multicolumn{3}{|c}{ De-anonymous score} \\
    Src of var. & MS & F & $p$-value & MS & F & $p$-value \\
    \midrule
    \midrule
    Between groups
    &$8.8 \times 10^{-1}$ & $2.6 \times 10^1$ & $1.0 \times 10^{-1}$ 
    &$7.8 \times 10^2$  & $7.7 \times 10^3$ & 0 \\
    Within groups   
    &$3.3 \times 10^{-1}$&- &- 
    &$1.0\times 10^{-1}$ &- &-\\
  \bottomrule
  \end{tabular}
  }
    \vskip -1ex
\end{table}




\subsection{Analysis of Risk Rating Results} 

The principal task of \textit{RiskProp} is to rate Ethereum accounts based on how ill-disposed they are. Given the account risk rating obtained by \textit{RiskProp}, we first review the results and investigate the capability of \textit{RiskProp} in discovering new risky accounts. Then, we dig deeper into the predicted high-risk accounts and obtain some insights.

\subsubsection{Distribution of risk rating results}

The risk value of an account ranges from 0 (low risk) to 10 (high risk). \blue{The distribution of the predicted risk scores is as follows: 33.58\% are located at (0,2], 63.45\% are located at (2,4], 2.03\% are located at (4,6], 0.78\% are located at (6, 8], and 0.19\% are located at (8, 10].}
This is consistent with expectations: The risk value of the Ethereum transaction network meets the power distribution law, indicating that the overwhelming majority of accounts act normally, and only very few accounts have abnormal behaviors.
We are interested in whether the high-risk accounts predicted by \textit{RiskProp} are actually questionable.
Thus, we first manually check the top 150 accounts with the highest risk (with both in-coming and out-going transactions). 
The finding is that 119 out of 150 (approximately \textbf{80\%}) accounts have abnormal behaviors. Among these 119 illicit accounts, 43 accounts are already labeled as ``phish/hack'' by Etherscan, whereas the remaining 76 are \textbf{newly discovered} suspicious accounts that are not marked in the existing label library. This result indicates the capabilities of \textit{RiskProp} in predicting undiscovered risky accounts and reducing financial losses.

\begin{figure}[t]
    \centering
    \includegraphics[width=0.9\linewidth]{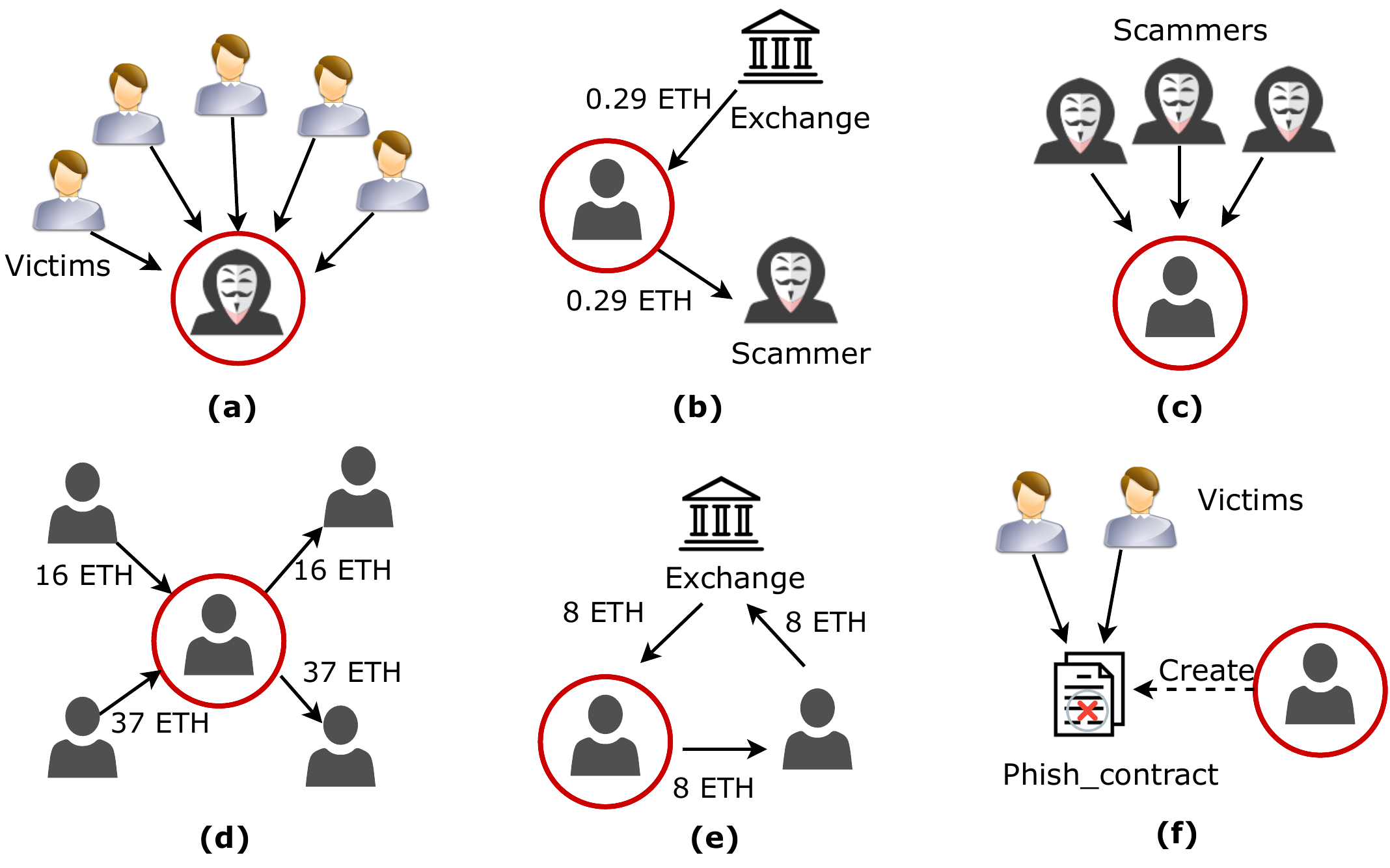}
    \vskip -1.5ex
    \caption{Visualization showing some typical transaction patterns of risky accounts (in red circles). }
    \label{fig:casestudy}
    \vskip -2ex
\end{figure}

\subsubsection{Case studies of transaction pattern}
\label{subsec:case_studies}
We then manually verified the predicted risky accounts by investigating their abnormal behaviors and find that there are many suspicious transaction patterns in the network.
In order to save space, we show 6 typical patterns in Figure~\ref{fig:casestudy}. These patterns are summarized from the real-world Ethereum transaction data and guided by current research and recommendation reports.

    \textbf{(a) Hacking scammers} are a list of addresses related to phishing and hacks. Figure~\ref{fig:casestudy}(a) shows a pattern of phishing accounts reported by users who suffered financial loss. A typical phishing scam on Ethereum is the \href{https://theripplecryptocurrency.com/bee-token-scam/}{``Bee Token ICO Scam” attack}, in which the phishers sent fake emails to the investors of an ICO with a fake Ethereum address to deposit their contributions into. For example, account \href{https://cn.etherscan.com/address/0xe336327426b8f95a5f5eb1f74144fd9065069c28} {0xe336} has been confirmed to be part of this ``Bee Token'' scam, and 243 ETH  has been sent to this address by 165 victims.
    
    \textbf{(b) Fund source of hacking scammers} are the upstream accounts of the known illicit accounts, which are collusion scam accounts to attract victims or provide money for hacking. As shown in Figure~\ref{fig:casestudy}(b),
    the behaviors of collusion scam accounts may look similar to victims. Nevertheless, we find that the upstream collusion accounts appear to participate in fewer transactions  with shorter time intervals, and there are attempts to transfer the entire ETH balance of the scammers according to the \textit{Red Flag Indicators} of FATF~\cite{fatf2022}. 

    \textbf{(c) Money laundering of scammers} are the downstream accounts of the known illicit accounts, which are collusion scam accounts to accept and transfer the stolen money, obfuscating the true sources. 
    As shown in Figure~\ref{fig:casestudy}(c), account \href{https://cn.etherscan.com/address/0x78f1e9312270d9acfc63c42910b352a6b759cd9d}{0x78f1} received stolen funds from several known hacking scammers, appearing to be the account used in the ``placement'' stage of money laundering.
    Another example is \href{https://cn.etherscan.com/address/0xcfddced74ac351d39f64ee8bc1eab83ae2c9fd4b}{0xcfdd}, which receives stolen funds from the Fake Starbase Crowdsale Contribution account \href{https://cn.etherscan.com/address/0x122c7f492c51c247e293b0f996fa63de61474959}{0x122c}.
    \textbf{(d) Zero-out middle accounts} are the middle accounts that serve as a bridge defined by Li \textit{et al.}~\cite{flowscope2020}. As shown in Figure~\ref{fig:casestudy}(d), most of the received funds will be transferred out in short succession (such as within 24 hours). See \href{https://cn.etherscan.com/address/0x126e75ca0340d84f59dff8aa199d965109a2e639}{0x126e} for an example.

    \textbf{(e) Round transfers among exchanges} denote a pattern that an account withdraws ETH without additional activity to a private wallet and then deposits back to the exchange, as shown in Figure~\ref{fig:casestudy}(e). Account \href{https://cn.etherscan.com/address/0x886e8902916752b360ded1993263e9791d20e9f6}{0x886e} withdraws 0.4 ETH from Cryptopia exchange and then deposits the same amount of ETH back to Cryptopia, which is an unnecessary step and incurs transaction fees~\cite{fatf2022}. Such a phenomenon indicates that the exchange is misused as a money-laundering mixer or is conducting wash trading~\cite{victor2021detecting}.
    
    
    \textbf{(f) Creators of illicit contracts} are often the manipulators behind the scenes.
    The Origin Protocol phishing scam contact account \href{https://cn.etherscan.com/address/0x98198d7ecc42a3e18d3fc52957db46930cf1d2e8}{0x9819} was created by account \href{https://cn.etherscan.com/address/0xff1a418a203c812ffa85cd3dd52776727258f22f}{0xff1a}. After victims deposited money into the phishing contract, the creator transfers the stolen funds back to himself via internal transactions, which deliberately enhances anonymity.

We observe that many illicit accounts are outside the label library and are still considered risk-free. Based on the results, we infer that our \textit{RiskProp} is able to expose  unlabeled illicit accounts. This is crucial on Ethereum, which lacks authorized and effective regulation. In addition, the newly identified illicit accounts can complete the current label collection for additional analysis.
\vspace{-1ex}

\subsection{Comparative Evaluation Settings}
\label{subsec:Classification}

To further evaluate the performance of our method and show the potential application, we employ the rating scores to conduct classification experiments that
divide Ethereum accounts into illicit and licit accounts, and we compare the results with the existing baseline methods for further verification. We wish to investigate if \textit{RiskProp} can give a higher risk rating for the known illicit accounts and a lower rating for known licit accounts.

\subsubsection{Compared Methods} 
\label{subsec:compared} 

\blue{As mentioned earlier, \textit{RiskProp} is the first algorithm that explores the risk rating of blockchain accounts. We chose a variety of methods (unsupervised and supervised) as baselines, which are similar to the problem we want to solve. We compare unsupervised \textit{RiskProp} with (i) \textbf{web page ranking}, such as PageRank~\cite{BRIN1998107}, and (ii) \textbf{bipartite graph-based fraud detection,} such as FraudEagle~\cite{akoglu2013opinion}, BIRDNEST~ \cite{hooi2016birdnest}, and REV2~\cite{Kumar2018rev2}, which are also unsupervised methods.}

\blue{The (semi-)supervised approaches are as follows. 
\textbf{(i) Machine learning methods}, e.g., logistic regression (LR), na\"ive Bayes (NB), decision tree (DT), support vector machine (SVM), random forest (RF), extreme gradient boosting (XGBoost), and LightGBM. These methods are used by  \cite{Al-Emari2021,Agarwal2021,Chen2020Phishing,Li2022TTAGN} for detection of abnormal Ethereum accounts. 
\textbf{(ii) Traditional graph neural network}, including DeepWalk, Node2Vec, and graph convolutional network (GCN) were conducted by Chen \textit{et al.}~\cite{Chen2021} for detection of Ethereum phishing scams.
\textbf{(iii) Graph neural network for graphs with heterophily}, such as CPGNN~\cite{zhu2021cpgnn}. The application of this type of algorithms is a recent research advancement in the task of Ethereum account classification~\cite{huang2022ethereum}.}  

\subsubsection{Evaluation Metrics}

To evaluate the performance of the models, we calculate the following metrics: $Precision$, $Recall$, $F1$, $Accuracy$, and $AUC$.
As we know, there are only 6 out of 10,000 (0.067 percent) accounts labeled in the entire dataset. To measure the order of the risk rating, we employ $Precision@k$ and $Recall@k$ to evaluate the ranking order of the algorithm ($@k$ means the top $k$ accounts). 
All baseline methods are tested using the original codes published by the authors. We repeat experiments 10 times and report the average results.

\begin{figure}
	\subfigure{
			\centering
			\includegraphics[width=0.46\linewidth]{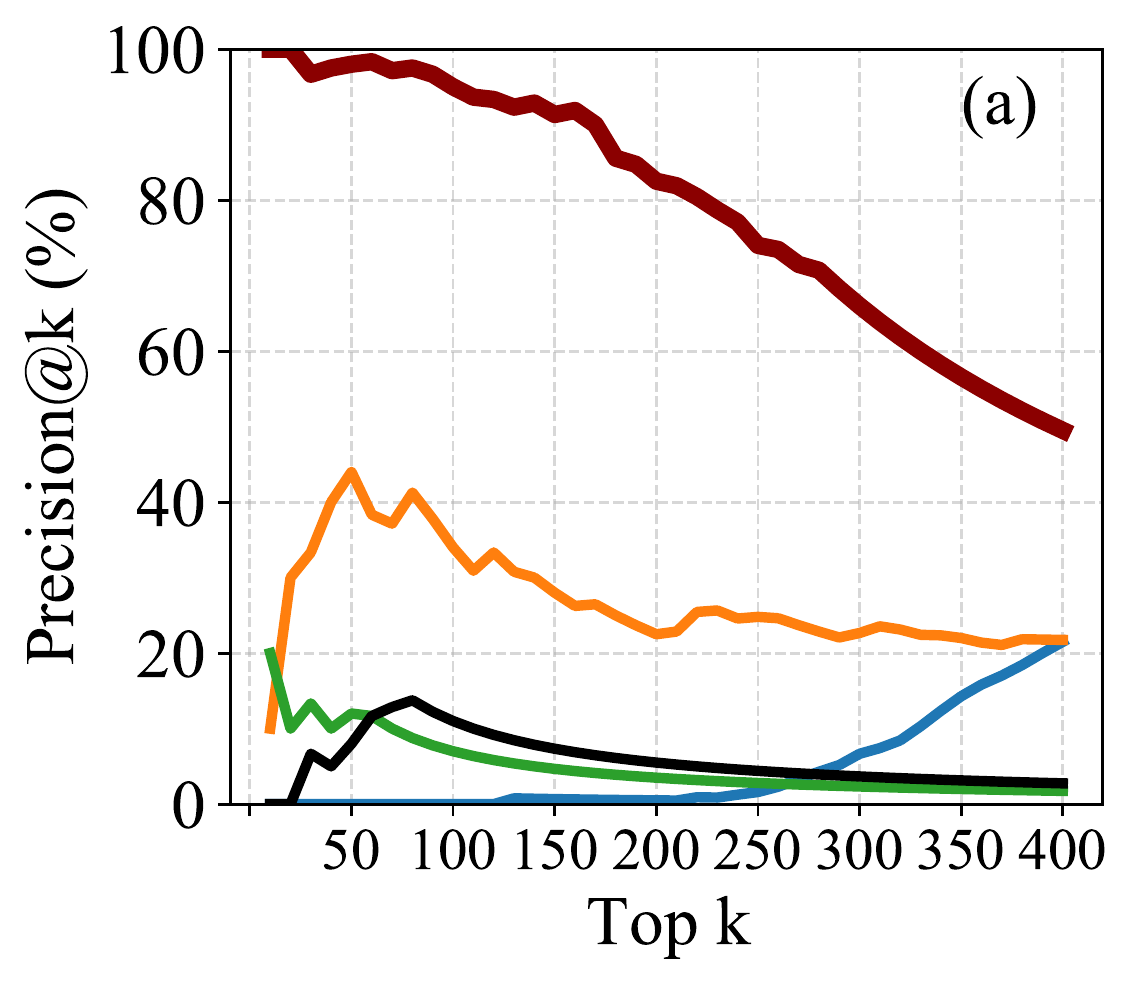}
	}    
	\subfigure{
			\centering
			\includegraphics[width=0.46\linewidth]{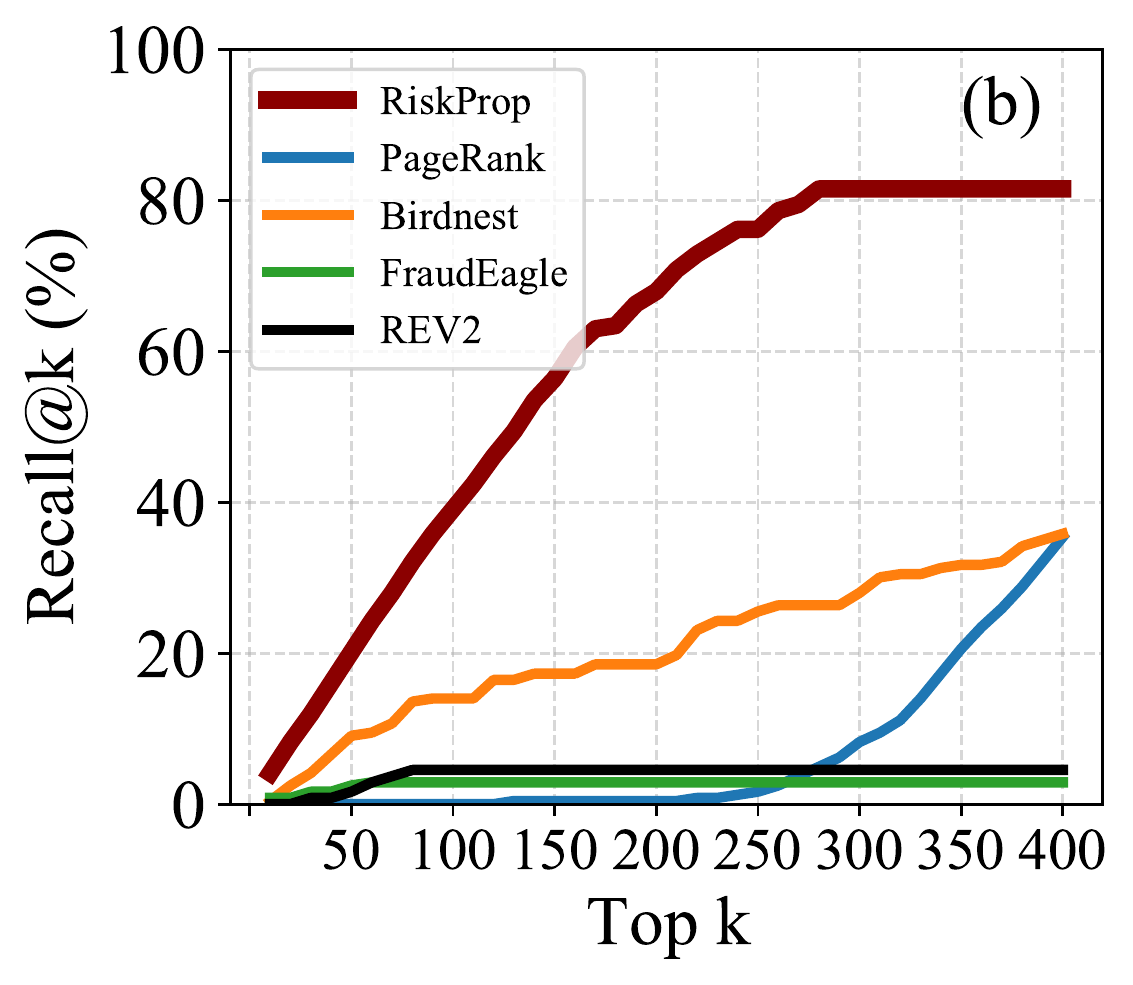}
	}    
	\vskip -1.5ex
	\caption{\blue{The $Precision@k$ and $Recall@k$ of illicit account prediction with different rating methods.}}
	\vskip -2ex
	\label{fig:unsupervised}
\end{figure}

\subsubsection{Implementation Details}
We evaluate the methods with binary labeled accounts (\textit{illicit} verse \textit{licit}) and, thus, we assume accounts in the top 1\% to be the illicit accounts (corresponding threshold: 6 for \textit{RiskProp}). The reason 
{for} this threshold and percentage setting is discussed in Section~\ref{subsec:parameter analysis}. 
The split of the dataset in the (semi-)supervised setting is $training:test = 8:2$.

\vspace{-2ex}
\subsection{Comparative Evaluation Results}

\blue{We report the $Precision@k$ and $Recall@k$ curves of the compared algorithms, as shown in Figure~\ref{fig:unsupervised}. We observe that \textit{RiskProp} obtains superior precision and recall than that of baseline with different $k$. Up to $k=100$, the precision of RiskProp is almost 1 for illicit account prediction, which is surprising for an unsupervised setting. The $Recall@k$ curve of RiskProp is significantly higher than the compared methods, and also increases steadily with $k$. 
Table~\ref{tab:results} shows the performance of unsupervised and supervised methods separately. 
We observe that \textit{RiskProp} remarkably outperforms the unsupervised graph rating baselines in terms of accuracy and AUC, improving by 38.90\% and 34.16\%, respectively.
Meanwhile, for the licit account prediction, we observe that RiskProp beats the best baseline (FraudEagle) with a 10.48\% improvement in its F1-score.  
These demonstrate the effectiveness of our account risk rating method without labeling information.}

%


\blue{Next, we turn our attention to the results of the semi-supervised \textit{RiskProp+} compared with the existing (semi-)supervised classification in Table~\ref{tab:results}, from which we derive the following conclusions: }
1) \textit{RiskProp+} outperforms all baseline methods by 12.32\% in terms of F1-score, 13.62\% in terms of AUC, and 10.56\% in terms of accuracy. 
2) The precision of licit accounts prediction is improved from 82.52\% (i.e., the average precision in baselines) to 93.33\%, which means more licit accounts can be correctly identified.
3) The superior performance of \textit{RiskProp} is more significant in the prediction of illicit accounts. The recall of illicit accounts prediction is improved from 60.56\% (i.e., the average recall of illicit accounts prediction in baselines) to 84.78\%. This shows the effectiveness of our framework in the prediction of both illicit and licit accounts.

\begin{table}
  \centering
  \caption{The classification results (\%) of unsupervised and (semi-)supervised methods. }
  \vskip -1ex
  \label{tab:results}
  \scalebox{0.78}{
  \begin{tabular}{c|ccc|ccc|cc}
    \toprule
        & \multicolumn{3}{c|}{\textbf{Illicit account}} & \multicolumn{3}{|c}{\textbf{Licit account}} & \multicolumn{2}{|c}{\textbf{Total}} \\
        \textbf{Methods} & P & R & F1       & P & R & F1 & Acc.  & AUC  \\
    \midrule
    \midrule
        PageRank &29.13  &67.49  &40.69  &67.08  &28.75  &40.25  &40.47  &48.12    \\
    \midrule    
        FraudEagle & 12.28 & 2.88 & 4.670  & 68.36 & \textbf{91.07} & 78.10  & 64.38 & 46.98 \\
        BIRDNEST &22.24  &47.32  &30.26  &55.24  &28.21  &37.35  &34.00 &37.77       \\
        REV2  &14.10 &4.527 &6.854 & 68.00 &88.04 &76.73  &62.76 &46.28   \\
    \midrule    
        RiskProp &\textbf{71.48}  &\textbf{71.48}  &\textbf{76.15}  &\textbf{91.44}  &85.89  &\textbf{88.58}  &\textbf{84.56}  &\textbf{83.69}    \\
    \bottomrule
    \toprule
        & \multicolumn{3}{c|}{\textbf{Illicit account}} & \multicolumn{3}{|c}{\textbf{Licit account}} & \multicolumn{2}{|c}{\textbf{Total}} \\
     \textbf{Methods} & P & R & F1       & P & R & F1 & Acc.  & AUC  \\
    \midrule
    \midrule
         LR  & 65.67 & 74.58 & 69.84         & 83.87 & 77.23 & 80.41 & 76.25  & 75.90  \\
         NB   & 59.79 & \textbf{98.31} & 74.36           & \textbf{98.41} & 61.39 & 75.61 & 75.00 & 79.85   \\
         DT & 62.66 & 54.07 & 58.04                & 75.79 & 81.19 & 78.40 & 71.75 & 68.39   \\
         SVM & \textbf{90.00} & 45.76 & 60.67        & 75.38 & \textbf{97.03} & 84.85 & 78.12 & 71.40  \\
         RF & 71.52 & 53.39 & 61.14                & 75.55 & 86.93 & 80.84 & 74.00 & 69.40  \\
         XGBoost & 67.35 & 55.95 & 61.11                      & 76.58 & 84.16 & 80.19 & 70.05 & 73.75  \\
         LightGBM & 75.77 & 65.19 & 69.93               & 84.23 & 92.57 & 88.21 & 81.86 & 77.75 \\
    \midrule     
         DeepWalk  & 66.85 & 66.30 & 66.54               & 86.48 & 86.75 & 86.61 & 83.13 & 81.03 \\
         Node2Vec  & 62.36 & 63.26 & 62.76               & 85.10 & 84.56 & 84.82 & 78.13 & 72.78  \\
         GCN & 20.83 & 27.78 & 23.81       & 79.46 & 68.99 & 73.86 & 60.63 & 47.40  \\         
    \midrule
        CPGNN & 52.17 & 61.54 & 56.47   &86.84  &81.82  &84.26  &76.88 & 71.68 \\
    \midrule    
        RiskProp+ & 70.91 & 84.78 & \textbf{77.23}    & 93.33 & 85.96 & \textbf{89.49} & \textbf{85.63} & \textbf{85.37}  \\
    \bottomrule
\end{tabular}
}
\vskip -1.5ex
\end{table}

\vspace{-1.5ex}
\subsection{Ablation Study}
To further validate the contribution of each component of the proposed \textit{RiskProp+}, we conduct an ablation study as follows.
    
\noindent$\bullet$ \textbf{RiskProp+ (Full model):} All components of the model and label data are included.
    
\noindent$\bullet$ \textbf{w/o label:} Labels are unavailable in the learning procedure, and the model is trained in an unsupervised manner.
    
\noindent$\bullet$ \textbf{w/o network propagation (NP):} Remove the NP procedure and calculate the average de-anonymous scores ($ADS$) for each accounts' outgoing transactions (payer role). An account is predicted as abnormal if its $ADS\leqslant0$.
    
\noindent$\bullet$ \textbf{w/o de-anonymous score (DS):} Replace DS with random scores, ranging from $-$1 to 1.

\vskip -1ex
\begin{table}
    \centering
    \caption{Illicit account prediction of ablation studies. } 
    \scalebox{0.8}{
    \begin{tabular}{c|ccc}
    \toprule
        Methods & Precision & Recall & F1-score\\
        \midrule
        \midrule
        RiskProp+ & 0.7091 & 0.8478 & \textbf{0.7723} \\
        RiskProp+ (w/o label) & \textbf{0.7148} & 0.8148 & 0.7615 \\
        RiskProp+ (w/o NP) & 0.3811 & \textbf{0.9959} & 0.5513 \\
        RiskProp+ (w/o DS) & 0.4737 & 0.1957  & 0.2769 \\
        \bottomrule
    \end{tabular}
    }
    \label{tab:ablation}
    \vskip -1.5ex
\end{table}

We derive the following findings from Table~\ref{tab:ablation}:
1) Without the labels, the F1-score drops only slightly, indicating that our \textit{RiskProp} does not rely on label data and can obtain good results in an unsupervised manner. To our surprise, the full model outperforms the RiskProp (w/o label), with a 3.3\% increase in recall and 0.51\% decrease in precision. This may be possibly explained by the reliability values of labeled illicit accounts remaining unchanged during training in the supervised setting.
2) RiskProp (w/o NP) has a only lower precision but a greatly improved recall, revealing that most of the illicit accounts are correctly predicted as illicit but that some licit accounts are misjudged to be illicit. This result demonstrates that de-anonymous score is an effective indicator of illicit transactions but their confidence varies among transactions. This result also confirms why we need to consider the confidence of the score in the propagation mechanism.
3) RiskProp (w/o DS) yields low precision (47.37\%) and severely low recall (19.57\%). This result demonstrates that, even if the network propagation model is retained, the wrong scores of transactions will be spread throughout 
the entire Ethereum transaction network, resulting in poor prediction results.


\subsection{Risk Threshold of \textit{RiskProp}} 
\label{subsec:parameter analysis}
Given accounts in the reversed order of their risk ratings, a natural question is \textit{how to classify licit or illicit accounts according to their risk ratings for the classification task?}
One possible option is to determine the percentage of known illicit labels in the dataset and set the risk value of this percentage as a demarcation line for account classification. However, the percentage is imprecise because some of the illicit accounts remain unrevealed according to the experimental results in Section~\ref{subsec:case_studies}.
Therefore, we try to establish a suitable risk threshold ($RTH$) of RiskProp by conducting classification experiments.
Figure~\ref{fig:scalability_RTH}(a) demonstrates the results of illicit account prediction with different risk thresholds, 
ranging from 1 to 10. As expected, the precision increases while the recall decreases with increasing $RTH$. In addition, F1, accuracy, and AUC first increase and then decrease with the increase in $RTH$. 
The best performance for F1 and AUC is when $RTH=6$. Thus, we set the risk threshold $RTH=6$ for \textit{RiskProp}.

\subsection{Guarantees for Practical Use} 
Here, we present guarantees for \textit{RiskProp} in practical use regarding the following aspects: 1) guarantee of convergence; 2) time complexity; and 3) linear scalability. 

\noindent\textit{\blue{\textbf{Convergence and Uniqueness.}}}
\blue{We present the theoretical properties of \textit{RiskProp}, including the proofs of prior knowledge, convergence, and uniqueness of the proposed metrics, i.e., \textit{reliability}, \textit{trustiness}, and \textit{confidence}. Proofs are shown in the Appendix due to lack of space.}

\noindent\textit{\textbf{Time complexity.}}
In each interaction, the \textit{RiskProp} updates the \textit{reliability}, \textit{Trustiness} metrics of accounts and \textit{Confidence} metric of transactions.
Therefore, the complexity of each iteration is $\mathcal{O}(|U|+|S|)=\mathcal{O}(|S|)$, $|S|$ is the total edges in the payer--payee network. Thus, for $k$ iterations, the total running time is $\mathcal{O}(k|S|)$.

\noindent\textit{\textbf{Linear scalability.}} 
We have shown that \textit{RiskProp} is linear in running time in the number of nodes. To show this experimentally as well, we create random networks of an increasing number of nodes and edges and compute the running time of the algorithm until convergence. 
Figure~\ref{fig:scalability_RTH}(b) shows that the running time increases linearly with the number of nodes in the network. Therefore, we can conclude that \textit{RiskProp} is a scalable rating method 
that is suitable for applications on large-scale transaction networks.

\begin{figure}
    \centering
    \subfigure[Impact of different \textit{RTH}]{
			\includegraphics[width=0.4\linewidth]{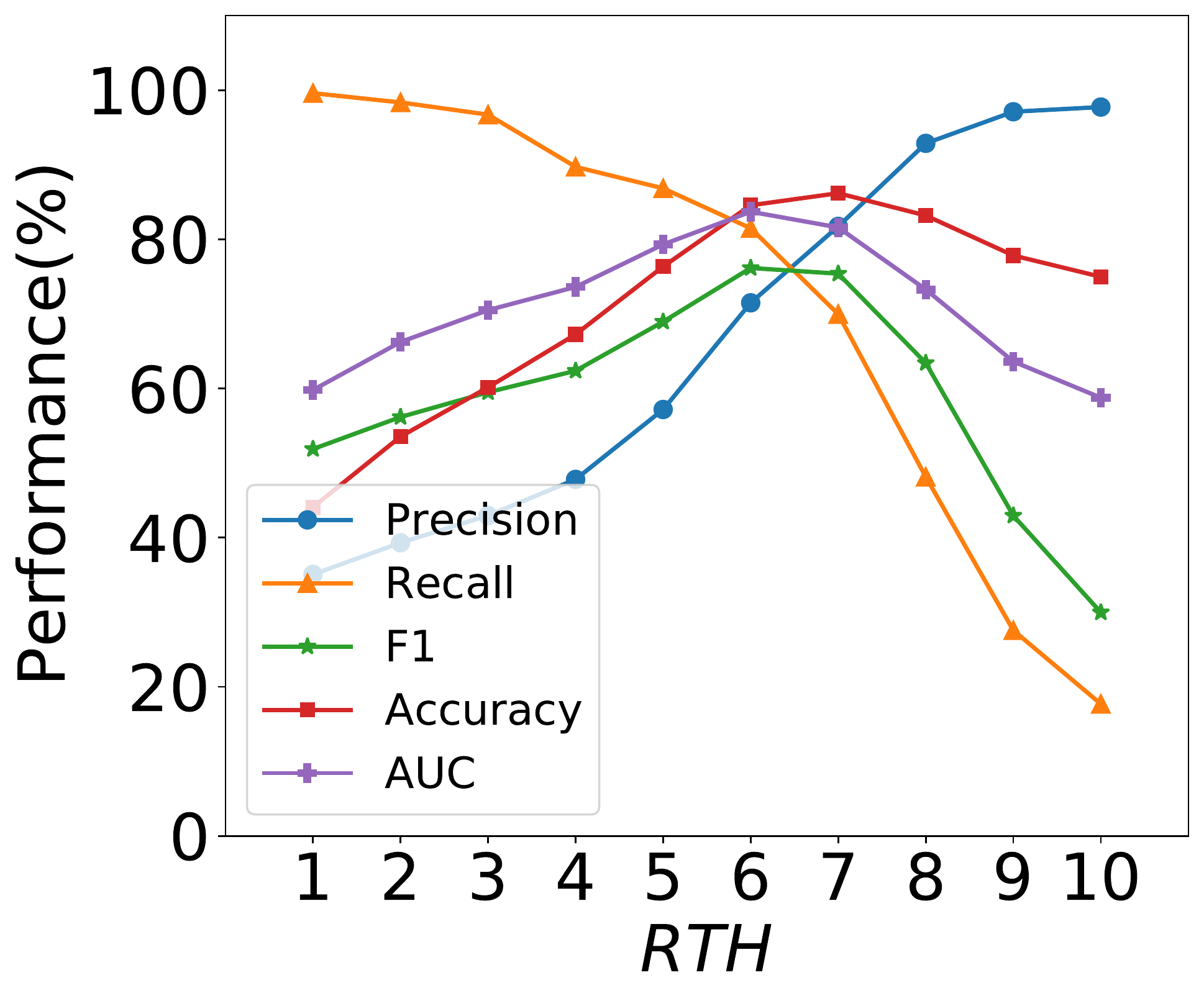}
	}
	\subfigure[Scalability of \textit{RiskProp}]{
        \includegraphics[width=0.4\linewidth]{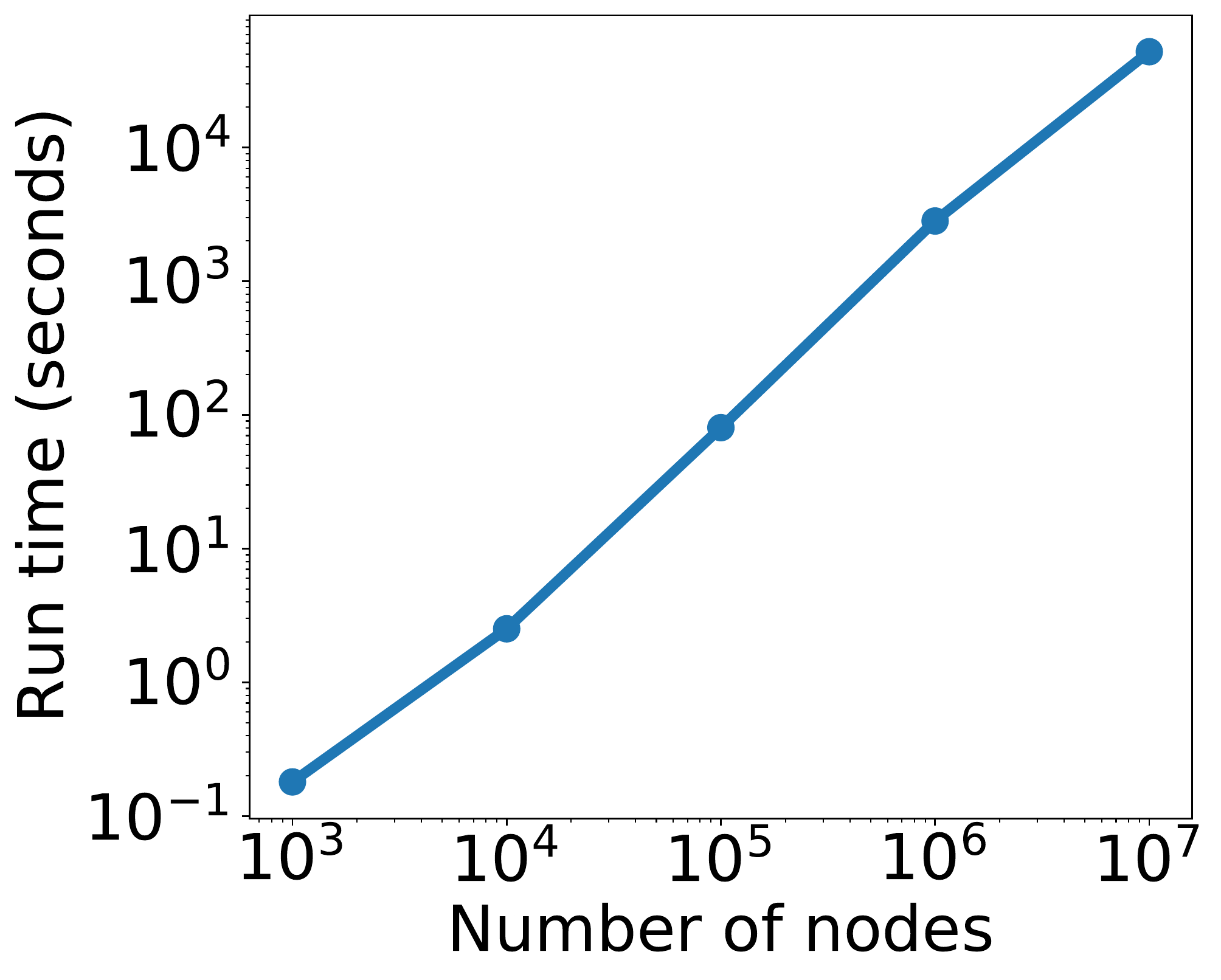}
    }
    \vskip -1.5ex
    \caption{Further analysis of \textit{RiskProp}.}
    \label{fig:scalability_RTH}
    \vskip -1.5ex
\end{figure}

\noindent\textit{\textbf{Analysis of incorrect predictions.}} 
\blue{Furthermore, the results of the \textit{RiskProp+} experiment showed that 39 out of 46 (85\%) phishing accounts were correctly predicted as illicit accounts. To understand why the remaining accounts failed to be detected as illicit by our model, we manually checked their transactions and neighbors and obtained the following results:
(i) For one of the accounts, we have a risk score of 5.72, and in practice, the system will also warn about such accounts that are close to the risk threshold ($RTH=6$).
(ii) One account is set as the default risk value because the phishing account has no outgoing transactions for the time being, and in practice, we can make correct predictions as soon as the phishing account starts laundering money.
(iii) Among the remaining five accounts, one account has a high transaction volume of 154. The remaining four accounts have a high volume of transactions along with withdrawals of ETH from exchanges, which directly contributed to the high de-anonymity score of transactions. However, there are two sides to the story: regulation and fraud are a game of confrontation. For hackers, reusing accounts reduces the probability of being identified as high risk and, at the same time, reusing accounts and withdrawing money from exchanges increase the risk of exposure and fund freezing.}
%



\section{Related Work}


\textbf{Risk control studies in cryptocurrency.}
In recent years, there has been growing interest in account clustering and detecting illicit activities (e.g., financial scams, money laundering) in cryptocurrency transaction networks~\cite{Wu2021JNCA}.
Victor~\cite{victor2020address} is the first to propose clustering heuristics for the Ethereum’s account model, including deposit address reuse, airdrop multi-participation, and self-authorization.
A recent review of the literature on cryptocurrency scams~\cite{9591634} showed that the existing methods (e.g., \cite{Ponzi2018Chen}, \cite{Farrugia2020}, and \cite{9491653}) are mainly based on supervised classifiers fed with handcrafted features. Many attempts have been made \cite{edgeprop2019Tam,elliptic2019Weber,trans2vec2020Wu} to incorporate structural information by learning the latent representations of accounts. Some researchers have investigated and modeled the money flow from a network perspective~\cite{modeling2020lin, lal2021understanding} 
to better identify illicit activities.
After all, there is still a black area regarding the estimation of the risk value of Ethereum accounts, which is the key task in alerting about suspicious accounts and transactions on the chain.

\textbf{Rating and ranking on graph data.}
The aim of ratings and rankings on graph data is to provide a score or an order for each node in a graph. Currently, the main solutions are based on link analysis technique~\cite{FC2019Sangers}, Bayesian model~\cite{hooi2016birdnest}, and iterative learning~\cite{Hooi2016FRAUDAR}, etc. Similarly to the proposed \textit{RiskProp} algorithm, \cite{Kumar2018rev2} proposed axioms and iterative formulations to establish the relationship between ratings. In \cite{Bias2011Mishra}, the authors measured the bias and prestige of nodes in networks based on trust scores. In \cite{Kurshan2020}, the authors highlighted that graph-based approaches provide unique solution opportunities for financial crime and fraud detection.
A review on this topic~\cite{Akoglu2015survey} described the problems in current studies: lack of ground truths, imbalanced class, and large-scale network. These challenges also exist in our risk rating problem on Ethereum transaction networks.

\section{Conclusions and Future Work}
In this paper, we present the first systematic study to assess the account risk via a rating system named \textit{RiskProp}. In \textit{RiskProp}, we modeled transaction records of Ethereum as a bipartite graph, proposed a novel metric called de-anonymous score to quantify the transaction risk, and designed a network propagation mechanism based on transaction semantics. 
By analyzing the rating results and manually checking the accounts with high risk, we evaluated the performance of \textit{RiskProp} and obtained new insights about transaction risks on Ethereum. In addition, we employed the obtained risk scores to conduct illicit/licit account classification experiments on labeled data, and the superiority of this method over baseline methods further verified the effectiveness of \textit{RiskProp} in risk estimation. 
For future work, we plan to integrate the transaction amounts and temporal information in our model, develop a web page or online tool for querying risk values of accounts, and share the details of risky cases with the Ethereum community. 






\bibliographystyle{ACM-Reference-Format}
\bibliography{RiskProp_arxiv_submit}

\newpage

\appendix
\section{Appendix } 
\label{sec:proof}




\subsection{Proof for Prior Knowledge}
In this part, we provide proofs that the proposed metrics, i.e., \textit{Reliability}, \textit{Trustiness}, and \textit{Confidence} satisfy Prior knowledge 1 - 5.
Prior knowledge 1 in the main paper is the following:

\noindent \textbf{[Prior knowledge 1] }
Payees with higher trustiness receive transactions with higher de-anonymous scores. Formally, if two payees $v_1$ and $v_2$ have a one-to-one mapping,  $h:In(v_1)\rightarrow In(v_2)$ and $Score(u,v_1) > Score(h(u), v_2)$ $\forall(u,v_1)\in In(v_1)$, then $T(v_1)>T(v_2)$.

The formulation to be used to show that the prior knowledge is satisfied is Equations~\ref{T}, \ref{C}, and~\ref{R} in the main paper.

\begin{footnotesize}
    \begin{equation}
        T(v) = \frac{\sum_{(u,v)\in In(v)} Score(u,v)\times Conf(u,v)}{|In(v)|} \nonumber 
    \end{equation}
    \begin{equation}
        R(u) = \frac{\sum_{(u,v)\in Out(u)} Conf(u,v)}{|Out(u)|} \nonumber 
    \end{equation}
    \begin{equation}
        Conf(u, v) = \frac{R(u) + (1-|Score(u,v)-T(v)|)}{2} \nonumber 
    \end{equation}    
\end{footnotesize}

\begin{proof}
    To prove the Prior Knowledge 1, let us take two payees $v_1$ and $v_2$ that have identically ego networks and a one-to-one mapping $h$, such that $|In(v_1)|=|In(v_2)|$, $Conf(u, v_1)=Conf(h(u), v_2)$, and $Score(u,v_1) > Score(h(u), v_2)$ $\forall(u,v_1)\in In(v_1)$. 
    
    According to Equation~\ref{T}, we have 
    
    \begin{scriptsize}
    \begin{equation}
    \begin{split}
        T(v_1)-T(v_2)&= \frac{\sum_{(u,v_1)\in In(v_1)} Score(u,v_1)\times Conf(u,v_1)}{|In(v_1)|}-\\
        &\frac{\sum_{(u,v_2)\in In(v_2)} Score(h(u),v_2)\times Conf(h(u),v_2)}{|In(v_2)|}\\
        &=\frac{\sum_{(u,v_1)\in In(v_1)}(Score(u,v_1)-Score(h(u),v_2))\times Conf(u,v_1)}{|In(v_1)|}
    \end{split}
    \nonumber 
    \end{equation}
    \end{scriptsize}
    
    As $Score(u,v_1) > Score(h(u), v_2)$, so
    
    \begin{footnotesize}
    \begin{equation}
    \begin{split}
    T(v_1)-T(v_2) > \frac{\sum_{(u,v_1)\in In(v_1)}Conf(u,v_1)}{|In(v_1)|}
    \end{split}
    \nonumber 
    \end{equation}
    \end{footnotesize}
    
    As $Conf(u,v_1)$ $\ge$ 0 because \textit{Confidence} are non-negative,
    \begin{footnotesize}
    \begin{equation}
    \begin{split}
    T(v_1)-T(v_2) > 0 \Rightarrow T(v_1) > T(v_2) 
    \end{split}
    \nonumber 
    \end{equation}
    \end{footnotesize}
    
    The other items of prior knowledge have very similar and straightforward proof.
\end{proof}

\subsection{Proof for Convergence}

Before the proof of convergence, we first discuss the boundary of proposed metrics. 
At the end of iteration $t$ of Algorithm~\ref{alg:riskprop}, and by equation~\ref{T}, \ref{C}, and~\ref{R}, we get,

\begin{footnotesize}
    \begin{equation}
    \label{eq1}
        T^t(v)=\frac{\sum_{(u,v)\in In(v)}Score(u,v)\times  Conf^{t-1}(u,v)}{|In(v)|}
    \nonumber 
    \end{equation}
    \begin{equation}
        R^t(u)=\frac{\sum_{(u,v)\in Out(u)}Conf^{t-1}(u,v)}{|Out(u)|}
        \nonumber 
    \end{equation}
    \begin{equation}
        Conf^t(u,v) = \frac{R^t(u)+(1-|Score(u,v)-T^t(v)|}{2})
        \nonumber 
    \end{equation}
\end{footnotesize}

$T^{\infty(v)}, R^{(\infty)}(u), Conf^{(\infty)}(u,v)$ are their final values after convergence.


\begin{lemma}(Boundary discussion)
    Set the maximum score in the transaction network as $M$, namely:
    \begin{small}
    \begin{equation}
        M = \mathop{\max}_{(u,v)} Score(u,v)
        \nonumber 
    \end{equation}
    \end{small}
    
    then $|M| < 1$.
    
    The difference between a payee $v$’s final \textit{Trustiness} and its \textit{Trustiness} after the first iteration is

    \begin{footnotesize}
        \begin{equation}
            |T^{\infty}(u)-T^1(u)|\le |M|
            \nonumber 
        \end{equation}
    \end{footnotesize}
    
    Similarly,
    
    \begin{footnotesize}
        \begin{equation}
        |R^{\infty}(u)-R^1(u)|\le 1
        \nonumber 
        \end{equation}
        \begin{equation}
        |Conf^{\infty}(u,v)-Conf^1(u,v)|\le \frac{1+|M|}{2} = \alpha~(\alpha \le 1)
        \nonumber 
        \end{equation}
    \end{footnotesize}
    
\end{lemma}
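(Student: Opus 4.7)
The plan is to prove the four inequalities in turn by combining the explicit values at iteration $1$ (which simplify dramatically because $T^0 \equiv 0.5$, $R^0 \equiv 0.7$, and $Conf^0 \equiv 0.5$ are all constants) with elementary range bounds on the propagated quantities and the triangle inequality. I would first establish that every score lies in $[-1,1]$ by inspecting the score formula: each of its two summands has the form $\frac{2\log n - \log N}{\log N}$ with $1 \le n \le N$, which is bounded by $1$ in absolute value and attains $1$ only when $n = N$. Under the mild assumption that no single edge simultaneously connects the payer of maximum out-degree to the payee of maximum in-degree (which fails only in degenerate transaction graphs), we obtain $M < 1$, and hence $|M| < 1$.

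For the Trustiness bound, the key observation is that $T^1(v) = \frac{1}{2|In(v)|}\sum_{(u,v) \in In(v)} Score(u,v)$ because $Conf^0 \equiv 0.5$ is a constant that can be pulled out of the sum. Subtracting from the fixed-point expression collapses to
\[
T^\infty(v) - T^1(v) \;=\; \frac{1}{|In(v)|}\sum_{(u,v)\in In(v)} Score(u,v)\bigl(Conf^\infty(u,v) - 0.5\bigr).
\]
Using $|Score(u,v)| \le |M|$ and $|Conf^\infty - 0.5| \le 0.5$ (which rests on $Conf \in [0,1]$ throughout), the triangle inequality gives $|T^\infty(v) - T^1(v)| \le |M|/2 \le |M|$.

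The Reliability bound is immediate because $R^t(u)$ is the simple average of $Conf$ values in $[0,1]$, so both $R^\infty(u)$ and $R^1(u)$ lie in $[0,1]$ and their difference is at most $1$. For the Confidence bound, I would subtract the defining formula at iterations $1$ and $\infty$, which yields
\[
Conf^\infty(u,v) - Conf^1(u,v) \;=\; \tfrac{1}{2}\bigl[\,(R^\infty(u) - R^1(u)) + (|Score - T^1(v)| - |Score - T^\infty(v)|)\,\bigr].
\]
The reverse triangle inequality bounds the second bracket in absolute value by $|T^\infty(v) - T^1(v)| \le |M|$ and the first by $1$, giving $|Conf^\infty - Conf^1| \le (1 + |M|)/2 = \alpha$; the final claim $\alpha \le 1$ follows from $|M| \le 1$.

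The main obstacle is justifying the range $Conf^t \in [0,1]$ that was used implicitly in the Trustiness step. From the update rule alone one only gets the crude worst case $Conf \ge (0 + 1 - 2)/2 = -0.5$, so a sharper argument is required. I would perform a joint induction on $t$: if $T^t \in [-|M|, |M|]$ and $R^t \in [0,1]$, then $|Score - T^t| \le 2|M|$, which forces $Conf^{t+1} \ge (1 - 2|M|)/2$, and combined with the weighted-average estimates keeps $T^{t+1}$ and $R^{t+1}$ within their ranges. Interleaving this range-preservation induction with the four displayed bounds is where most of the bookkeeping care is required; everything else reduces to the triangle inequality applied to the explicit iteration-$1$ values.
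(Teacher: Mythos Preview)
Your approach matches the paper's: subtract the iteration-$1$ and limiting update formulas and bound each piece via the triangle (and reverse-triangle) inequality together with $|Score|\le |M|$ and $|Conf^{\infty}-Conf^{0}|\le 1$. The paper does not carry out the range induction in your last paragraph---it simply invokes $Conf\in[0,1]$ as the declared range of the metric and uses the crude bound $|Conf^{\infty}-Conf^{0}|\le 1$ directly---so that extra bookkeeping (and your sharper intermediate bound $|M|/2$ for $T$) goes beyond what the paper actually proves.
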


\begin{proof}
First, we state that $|M|$ is strictly less than 1 in practice. 
According to the formulation of $Score$ of the main paper, we can see that $Score(u,v) = 1$ when $OutTxn(u)=maxOut$ and $InTxn(v)=maxIn$, it is an extreme situation where the largest number of payments and receptions of the entire network appears in one transaction. The other case is $Score(u,v) = -1$ when that $OutTxn(u)=1$ and $InTxn(v)=1$ at the same time. This situation presents to be some isolated transactions, however, they do not propagate risk and thus do not influence convergence. These situations are out of our consideration. So we get $|M|$ is strictly smaller than 1.
    
Then, we prove that $T(v)$ is bounded during the iterations:

\begin{footnotesize}
    \begin{equation}
    \begin{split}
    |T^{\infty}(v)-T^1(v)| & = | \frac{\sum_{(u,v)\in In(v)}Score(u,v)\times Conf^{\infty}(u,v)}{|In(v)|}-\\
    &\frac{\sum_{(u,v)\in In(v)}Score(u,v)\times  Conf^{0}(u,v)}{|In(v)|}|
    \end{split}
    \nonumber 
    \end{equation}
\end{footnotesize}

Since $|x+y|\le |x|+|y|$, we get,

\begin{scriptsize}
\begin{equation}
\begin{split}
    |T^{\infty}(v)-T^1(v)| &\le \frac{\sum_{(u,v)\in In(v)}|Score(u,v)\times (Conf^{\infty}(u,v)  - Conf^{0}(u,v))|}{|In(v)|}
\end{split}
\nonumber 
\end{equation}
\end{scriptsize}
Since $|x\times  y| = |x|\times |y|$, we have,
\begin{scriptsize}
\begin{equation}
\begin{split}
|T^{\infty}(v)-T^1(v)|  &\le \frac{\sum_{(u,v)\in In(v)}|Score(u,v)|\times  (Conf^{\infty}(u,v) - Conf^{0}(u,v))|}{|In(v)|}
\end{split}
\label{T_1}
\end{equation}
\end{scriptsize}
Since $|Score(u, v)| \le |M| \le 1 $, and $|(Conf^{\infty}(u,v) - Conf^{0}(u,v))| \le 1$, we get,

\begin{scriptsize}
\begin{equation}
\begin{split}
|T^{\infty}(v)-T^1(v)|  \le |M| \times  \frac{|In(v)|}{|In(v)|} = |M|
\end{split}
\nonumber 
\end{equation}    
\end{scriptsize}

Next, we conduct the proof on $R(u)$:
\begin{scriptsize}
\begin{equation}
\begin{split}
|R^{\infty}(u)-R^1(u)| & = \frac{|\sum_{(u,v)\in Out(u)} Conf^{\infty}(u,v)  -\sum_{(u,v)\in Out(u)}Conf^{0}(u,v)|}{|Out(u)|}
\end{split}
\nonumber 
\end{equation}
\end{scriptsize}

Again, since $|x\times  y| = |x|\times |y|$, we get,

\begin{scriptsize}
\begin{equation}
\begin{split}
&|R^{\infty}(u)-R^1(u)| \le \frac{\sum_{(u,v)\in Out(u)}| Conf^{\infty}(u,v) - Conf^{0}(u,v)|}{|Out(u)|}
\end{split}
\label{R_1}
\end{equation}
    
\end{scriptsize}

Similarly, since $|(Conf^{\infty}(u,v) - Conf^{0}(u,v))| \le 1$, we have,
\begin{scriptsize}
 \begin{equation}
\begin{split}
|R^{\infty}(u)-R^1(u)|  \le \frac{|Out(u)|}{|Out(u)|} = 1
\end{split}
\nonumber 
\end{equation}   
\end{scriptsize}

Finally, we calculate the bound of  $Conf(u,v) $:
\begin{scriptsize}
\begin{equation}
\begin{split}
&|Conf^{\infty}(u,v)-Conf^1(u,v)|  =\\ &\frac{|R^{\infty}(u)-R^1(u)+|Score(u,v)-T^1(v)|-|Score(u,v)-T^{\infty}(v)||}{2}
\end{split}
\nonumber 
\end{equation}
\end{scriptsize}

Since $|x+y|\le |x|+|y|$, we have

\begin{scriptsize}
\begin{equation}
\begin{split}
 &|Conf^{\infty}(u,v)-Conf^1(u,v)| \le\\
 &\frac{|R^{\infty}(u)-R^1(u)|+
 (||Score(u,v)-T^1(v)|-|Score(u,v)-T^{\infty}(v)||)}{2}
\end{split}
\nonumber 
\end{equation}
\end{scriptsize}


Since $||x|-|y|| \le |x-y|$, it follows that,
\begin{scriptsize}
\begin{equation}
\begin{split}
&|Conf^{\infty}(u,v)-Conf^1(u,v)| \le \frac{|R^{\infty}(u)-R^1(u)|+ |T^{\infty}(v)-T^1(v)|}{2}
\end{split}
\label{C_1}
\end{equation}    
\end{scriptsize}

Since $|R^{\infty}(u) - R^{1}(u)| \le 1$, and$|T^{\infty}(v)-T^1(v)| \le |M|$, we get,

\begin{scriptsize}
\begin{equation}
\begin{split}
|Conf^{\infty}(u,v)-Conf^1(u,v)|  \le  \frac{1+|M|}{2}
\end{split}
\nonumber 
\end{equation}    
\end{scriptsize}

For convenience, we let $\frac{1+|M|}{2}=\alpha$. Since $|M|<1$, then $\alpha<1$.

\end{proof}

\begin{theorem}
    Convergence of Propagation: 
    The difference during iterations is bounded as as $|Conf^{\infty}(u,v)-Conf^t(u,v)| \le \alpha^{t}$ ($\alpha=\frac{1+|M|}{2} < 1$), $\forall(u, v)\in S$. As $t$ increases, the difference decreases and $Conf^t(u,v)$ converges to $|Conf^{\infty}(u,v)$. Similarly, $|T^{\infty}(v)-T^t(v)| \le \alpha^{t-1},\forall v \in V$, $|R^{\infty}(u)-R^t(u)| \le \alpha^{t-1},\forall u \in U$.
    

\end{theorem}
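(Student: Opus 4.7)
The plan is to establish the three inequalities simultaneously by strong induction on the iteration counter $t$, since $T^t$ and $R^t$ depend on $Conf^{t-1}$ while $Conf^t$ in turn depends on $T^t$ and $R^t$; separating the bounds would decouple the contraction and destroy the geometric factor $\alpha$. The Boundary Lemma already handles the base case $t=1$, giving $|T^\infty(v)-T^1(v)|\le|M|$, $|R^\infty(u)-R^1(u)|\le 1=\alpha^{0}$, and $|Conf^\infty(u,v)-Conf^1(u,v)|\le\alpha=\alpha^{1}$.

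For the inductive step I would assume the three bounds hold at step $t-1$, in particular $|Conf^\infty(u,v)-Conf^{t-1}(u,v)|\le\alpha^{t-1}$ for every transaction, and then rerun the three manipulations from the lemma proof with $Conf^{0}$ replaced by $Conf^{t-1}$. Because Equation~\ref{T} is a $Score$-weighted average of $Conf$, the derivation of \eqref{T_1} gives the sharpened bound $|T^\infty(v)-T^t(v)|\le|M|\cdot\alpha^{t-1}$; the parallel derivation of \eqref{R_1}, which has no $Score$ factor, gives $|R^\infty(u)-R^t(u)|\le\alpha^{t-1}$. Both already satisfy the bounds stated in the theorem since $|M|\le 1$, but I would carry the tighter $|M|\cdot\alpha^{t-1}$ form forward because it is essential for the $Conf$ step.

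For the $Conf$ update I would invoke Equation~\ref{C} together with $|a+b|\le|a|+|b|$ and $||x|-|y||\le|x-y|$, exactly as in \eqref{C_1}, to obtain
\[
|Conf^\infty(u,v)-Conf^t(u,v)|\le\tfrac{1}{2}\bigl(|R^\infty(u)-R^t(u)|+|T^\infty(v)-T^t(v)|\bigr),
\]
and then substitute the two just-established inductive bounds to get
\[
|Conf^\infty(u,v)-Conf^t(u,v)|\le\tfrac{\alpha^{t-1}+|M|\,\alpha^{t-1}}{2}=\tfrac{1+|M|}{2}\,\alpha^{t-1}=\alpha^{t},
\]
which closes the induction. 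Since $\alpha<1$ strictly, the three sequences are Cauchy and converge geometrically; applied to the difference of any two candidate fixed points, the same estimate forces $T^\infty$, $R^\infty$, and $Conf^\infty$ to be unique and independent of the initial values $T^{0}$, $R^{0}$, $Conf^{0}$.

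The main obstacle is recognizing that one must propagate the \emph{tighter} bound $|M|\cdot\alpha^{t-1}$ for $T$ through the induction; if one used only the stated $\alpha^{t-1}$, the $Conf$ step would yield $\tfrac{1}{2}(\alpha^{t-1}+\alpha^{t-1})=\alpha^{t-1}$ rather than $\alpha^{t}$, and the geometric contraction would be lost. The arithmetic identity $\tfrac{1}{2}(1+|M|)=\alpha$ is precisely the hinge that drives the decay, and it is only available because the Boundary Lemma guarantees $|M|<1$ strictly; isolated transactions that would force $|M|=1$ are ruled out there because they do not propagate risk.
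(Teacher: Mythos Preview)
Your proposal is correct and follows essentially the same route as the paper's own proof: induction on $t$ with the Boundary Lemma as base case, the key observation that the $T$-bound picks up an extra factor $|M|$ from the $Score$-weighting while the $R$-bound does not, and the recombination $\tfrac{1}{2}(\alpha^{t-1}+|M|\,\alpha^{t-1})=\alpha^{t}$ via Equation~\ref{C}. The only cosmetic differences are that you derive the $T$- and $R$-bounds separately before feeding them into the $Conf$ estimate (the paper does all three in one chained display), and that your closing remark on uniqueness anticipates what the paper isolates as a separate theorem; neither changes the substance of the argument.
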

\begin{proof}

    Similar to Equations~\ref{T_1}, \ref{R_1}, and \ref{C_1}, we have,
\begin{scriptsize}
    \begin{equation}
    \label{equ:Conf}
    \begin{split}
        &|Conf^{\infty}(u,v)-Conf^t(u,v)| \le \frac{|R^{\infty}(u)-R^t(u)|+ |T^{\infty}(v)-T^t(v)|}{2}
    \end{split}
    \end{equation}    
    \begin{equation}
    \label{equ:R}
    \begin{split}
    &|R^{\infty}(u)-R^t(u)|\le \frac{\sum_{(u,v,)\in Out(u)}| Conf^{\infty}(u,v) - Conf^{t-1}(u,v)|}{|Out(u)|}
    \end{split}
    \end{equation}
    \begin{equation}
    \label{equ:T}
    \begin{split}
        &|T^{\infty}(v)-T^t(v)| \\
        &\le \frac{\sum_{(u,v)\in In(v)}|Score(u,v)|\times  |(Conf^{\infty}(u,v)- Conf^{t-1}(u,v))|}{|In(v)|}
    \end{split}
    \end{equation}
\end{scriptsize}

First, we will prove the convergence of \textit{Confidence} using mathematical induction.

\textbf{Base case of induction.}

When $t=1$, as we proved in Lemma A.1, we get:
\begin{footnotesize}
    \begin{equation}
    |Conf^{\infty}(u,v)-Conf^1(u,v)|\le \alpha^{1}
    \nonumber 
    \end{equation}
\end{footnotesize}

\textbf{Induction step.}  

We assume by hypothesis that

\begin{footnotesize}
\begin{equation}
    |Conf^{\infty}(u,v)-Conf^{t-1}(u,v)|\le \alpha^{t-1},
    \nonumber 
\end{equation} 
\end{footnotesize}

which is consistent with the base case already.

Then, by substituting Equations~\ref{equ:R} and~\ref{equ:T} into Equation~\ref{equ:Conf}, for the case in the next iteration where time is $t$, we have,
\begin{scriptsize}
    \begin{equation}
    \begin{split}
    &|Conf^{\infty}(u,v)-Conf^t(u,v)|  \\
    &\le \frac{\sum_{(u,v)\in Out(u)}| Conf^{\infty}(u,v) - Conf^{t-1}(u,v)|}{2\times|Out(u)|})\\
    &+\frac{\sum_{(u,v)\in In(v)}|Score(u,v)|\times |(Conf^{\infty}(u,v) - Conf^{t-1}(u,v)|}{2\times|In(v)|}\\
    &\le \frac{1}{2}\times   \left((\frac{1+|M|}{2})^{t-1} + \frac{|M|\times \sum_{(u,v)\in In(v)} |Conf^{\infty}(u,v) - Conf^{t-1}(u,v)|}{|In(v)|}\right)\\
    &\le \frac{1}{2}\times  \left((\frac{1+|M|}{2})^{t-1} +|M|  \times  (\frac{1+|M|}{2})^{t-1}\right) \\
    &\le  (\frac{1+|M|}{2})^{t} = \alpha^{t} \\
    \end{split}
    \nonumber 
    \end{equation}    
\end{scriptsize}

Therefore, $|Conf^{\infty}(u,v)-Conf^t(u,v)|\le \alpha^{t}$.

\begin{scriptsize}
\begin{equation}
\begin{split}
|R^{\infty}(u)-R^t(u)| & \le \frac{\sum_{(u,v)\in Out(u)}| Conf^{\infty}(u,v) - Conf^{t-1}(u,v)|}{|Out(u)|} \\
&\le \frac{\sum_{(u,v)\in Out(u)} (\frac{1+|M|}{2})^{t-1}}{|Out(u)|}\\
&\le (\frac{1+|M|}{2})^{t-1} = \alpha^{t-1}
\end{split}
\nonumber 
\end{equation}
  
\end{scriptsize}

\begin{scriptsize}

\begin{equation}
\begin{split}
&|T^{\infty}(v)-T(v)^t|  \\
&\le \frac{\sum_{(u,v)\in In(v)}|Score(u,v)|\times  |(Conf^{\infty}(u,v) - Conf^{t-1}(u,v))|}{|In(v)|} \\
&\le \frac{\sum_{(u,v)\in In(v)}|(Conf^{\infty}(u,v) - Conf^{t-1}(u,v))|}{|In(v)|}\\
&\le \frac{\sum_{(u,v)\in In(v)}(\frac{1+|M|}{2})^{t-1}}{|In(v)|} \\
&\le (\frac{1+|M|}{2})^{t-1} = \alpha^{t-1}
\end{split}
\nonumber 
\end{equation}
\end{scriptsize}

As discussed in the Lemma A.1. we know that $|M|$ is strictly smaller than 1, then we have $\alpha <1$. As $t$ increases, $\alpha^{t-1} \rightarrow 0$ and $\alpha^{t} \rightarrow 0$, so after \textit{t} iterations, $Conf(u,v)^t\rightarrow Conf^{\infty}(u,v)$, $R(u)^t\rightarrow R^{\infty}(u)$, and $T(v)^t\rightarrow T^{\infty}(v)$, the algorithm converges.

\end{proof}

\subsection{Proof for Uniqueness}
In this part, we provides proofs that \textit{Reliability}, \textit{Trustiness}, and \textit{Confidence} are unique. 

\begin{theorem}
    Confidence, Reliability, and Trustiness converge to the unique value. 
\end{theorem}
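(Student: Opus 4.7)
The plan is to establish uniqueness by a contraction argument that reuses the Lipschitz-type bounds already derived in the proof of Theorem A.2. Suppose $(T, R, Conf)$ and $(T', R', Conf')$ are two sets of values that both satisfy the fixed-point system of Equations (2), (3), and (4) simultaneously. I would define the global gaps
\begin{equation*}
\Delta_T = \sup_{v \in V} |T(v) - T'(v)|,\ \ \Delta_R = \sup_{u \in U} |R(u) - R'(u)|,\ \ \Delta_C = \sup_{(u,v) \in S} |Conf(u,v) - Conf'(u,v)|,
\end{equation*}
all of which are finite because each of the three metrics lies in a bounded range. The goal is to force $\Delta_T = \Delta_R = \Delta_C = 0$.

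Next, I would bound each gap via the corresponding defining equation, mirroring the inequalities used in the convergence proof but applied to the difference between two candidate fixed points rather than between iterates and their limit. The triangle inequality, together with $|Score(u,v)| \leq |M|$ where $|M|<1$ (established in Lemma A.1), yields $\Delta_T \leq |M|\,\Delta_C$ from Equation (2) and $\Delta_R \leq \Delta_C$ from Equation (4). For Equation (3) I would invoke the reverse triangle inequality $\bigl||a|-|b|\bigr| \leq |a-b|$ to strip the nested absolute value, obtaining $\Delta_C \leq \tfrac{1}{2}(\Delta_R + \Delta_T)$. All three of these steps are essentially one-line translations of the inequalities already used in the convergence proof.

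Substituting the first two bounds into the third then gives
\begin{equation*}
\Delta_C \leq \tfrac{1}{2}\bigl(\Delta_C + |M|\,\Delta_C\bigr) = \alpha\, \Delta_C, \qquad \alpha = \tfrac{1+|M|}{2} < 1.
\end{equation*}
Since $\Delta_C \geq 0$, this forces $(1-\alpha)\Delta_C \leq 0$ and therefore $\Delta_C = 0$, from which $\Delta_R = 0$ and $\Delta_T = 0$ follow via the first two bounds. Hence any two candidate fixed points coincide, so the fixed point of the coupled system is unique. The main obstacle is conceptually mild: the argument is essentially the Banach contraction principle applied to the joint update map, and every Lipschitz constant needed was already computed in Theorem A.2, so no new machinery is required beyond the strict bound $|M| < 1$ guaranteed by Lemma A.1.
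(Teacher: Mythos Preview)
Your proposal is correct and is essentially the same argument as the paper's: both take the maximal Confidence discrepancy (the paper picks a maximizing edge and calls the gap $D$, you use the supremum $\Delta_C$), feed it through the same three Lipschitz bounds derived in the convergence proof, and obtain $D \le \alpha D$ with $\alpha = \tfrac{1+|M|}{2} < 1$ to force the gap to zero. Your write-up is slightly tidier in that it explicitly defines and kills all three gaps $\Delta_T,\Delta_R,\Delta_C$, whereas the paper handles $D$ for Confidence and defers the other two to ``similar proof,'' but the substance is identical.
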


\begin{proof}
First, we consider the uniqueness of \textit{Confidence} using mathematical contradiction. 

Let the $Conf(u,v)$ converges to different values. So, let $(u,v)$ be the transaction with maximum \textit{Confidence} difference, $D$ (with $D \ge 0$), between its two possible $Conf_1(u,v)$ and $Conf_2(u,v)$.

According to Equation \ref{equ:Conf}, we get,

\begin{footnotesize}
\begin{equation}
\label{equ:U1}
\begin{split}
D&=|Conf_1^{\infty}(u,v)-Conf_2^{\infty}(u,v)| \\ 
&\le\frac{|R_1^{\infty}(u)-R_2^{\infty}(u)|+|T_1^{\infty}(v)-T_2^{\infty}(v)|}{2}
\end{split}
\end{equation}  
\end{footnotesize}

Then, according to Equation \ref{equ:R} and \ref{equ:T}, we have,

\begin{scriptsize}
\begin{equation}
\label{equ:U2}
\begin{split}
|R_1^{\infty}(u)-R_2^{\infty}(u)| & \le \frac{\sum_{(u,v)\in Out(u)}| Conf_1^{\infty}(u,v) - Conf_2^{\infty}(u,v)|}{|Out(u)|} \\
& \le D
\end{split}
\end{equation} 

\begin{equation}
\label{equ:U3}
\begin{split}
|T_1^{\infty}(v)-T_2^{\infty}(v)| &\le \frac{\sum_{(u,v)\in In(v)}|Score(u,v)|\times  |Conf_1^{\infty}(u,v)- Conf_2^{\infty}(u,v)|}{|In(v)|} \\
& \le |M|\times D
\end{split}
\end{equation} 
\end{scriptsize}


We substitute Equation \ref{equ:U2} and \ref{equ:U3} into Equation (\ref{equ:U1}), and get,
\begin{footnotesize}
\begin{equation}
\begin{split}
&D=|Conf_1^{\infty}(u,v)-Conf_2^{\infty}(u,v)|  \\
&\le \frac{1}{2}\times  (\frac{\sum_{(u,v)\in Out(u)}| Conf_1^{\infty}(u,v) - Conf_2^{\infty}(u,v)|}{|Out(u)|} \\
&+\frac{\sum_{(u,v)\in In(v)}|Score(u,v)|\times |Conf_1^{\infty}(u,v) - Conf_2^{\infty}(u,v)|}{|In(v)|} \\
&\le \frac{1}{2}\times  \left( D +\frac{|M|\times\sum_{(u,v)\in In(v)} |Conf^{\infty}(u,v) - Conf^{\infty}(u,v)|}{|In(v)|} \right)\\
&\le \frac{1}{2}\times ( D +|M| \times  D ) \\
&\le  (\frac{1+|M|}{2}) \times  D \\
& = \alpha \times  D \\
\end{split}
\nonumber 
\end{equation}
\end{footnotesize}

Thus, by solving $D \le\alpha \times  D (\alpha \ne 0)$ and with the condition that $D\ge 0$, we obtain $D=0$.
Then, $|Conf_1^{\infty}(u,v)-Conf_2^{\infty}(u,v)|=0$ and converge value of \textit{Confidence} is unique. The uniqueness of \textit{Trustiness} and \textit{Reliability} have similar proof.

\end{proof}
\end{document}